\title{Non-Cooperative Rational Synthesis Problem for Probabilistic Strategies}
\newtheorem{theorem}{Theorem}
\newtheorem{proposition}[theorem]{Proposition}
\newtheorem{lemma}[theorem]{Lemma}
\newtheorem{corollary}[theorem]{Corollary}
\theoremstyle{definition}
\newtheorem{example}[theorem]{Example}
\newcommand{\widerbar}[1]{%
  \ensuremath{\text{% \text is needed for scaling when used in a superscript
    \rlap{$\mskip2mu\overline{\scalebox{0.8}[1]{$\phantom{#1}$}}$}}%
  #1}}
\newcommand{\barsigma}{\widerbar{\sigma}}
\newcommand{\baralpha}{\widerbar{\alpha}}
\newcommand{\barz}{\widerbar{z}}
\newcommand{\barr}{\widerbar{r}}
\newcommand{\psisys}{\psi^{\mathit{sys}}}
\newcommand{\psienv}{\psi^{\mathit{env}}}
\newcommand{\Deltapd}{\Delta_{\smash{\mathit{pd}}}}
\newcommand{\Nat}{\mathbb{N}}
\newcommand{\calG}{\mathcal{G}}
\newcommand{\calA}{\mathcal{A}}
\newcommand{\Nature}{\mathit{nature}}
\newcommand{\Inf}{\mathit{inf}}
\newcommand{\Next}{\mathit{next}}
\newcommand{\nextPr}{\mathit{nextPr}}
\newcommand{\PR}{\mathit{Pr}}
\newcommand{\Pay}{\mathit{Pay}}
\newcommand{\NE}{\mathit{NE}}
\newcommand{\ZNE}{0\mathit{NE}}
\newcommand{\ZpNE}{0'\mathit{NE}}
\newcommand{\SUPP}{\mathrm{SUPP}}
\newcommand{\BSCC}{\mathrm{BSCC}}
\newcommand{\COND}{\mathit{COND}}
\newcommand{\EC}{\mathrm{EC}}
\newcommand{\OmegaX}[1]{\Omega_{-{#1}}}
\newcommand{\TR}{\mathit{TR}}
\newcommand{\Buchi}{\ifmmode\text{\textit{B\"{u}chi}}\else B\"{u}chi\fi}
\newcommand{\Muller}{\mathit{Muller}}
\newcommand{\OC}[1]{\mathbf{#1}}
\begin{document}
\maketitle
\begin{summary}
We study the decidability and complexity of 
non-cooperative rational synthesis problem (abbreviated as NCRSP) 
for some classes of probabilistic strategies.  
We show that NCRSP for stationary strategies and Muller objectives 
is in 3-EXPTIME, 
and if we restrict the strategies of environment players to be positional, 
NCRSP becomes NEXPSPACE solvable. 
On the other hand, NCRSP$_{>}$, which is a variant of NCRSP, is 
shown to be undecidable even for pure finite-state strategies and 
terminal reachability objectives. 
Finally, we show that NCRSP becomes EXPTIME solvable if we 
restrict the memory of a strategy to be the most recently visited
$t$ vertices where $t$ is linear in the size of the game. 
\end{summary}
\begin{keywords}
stochastic multiplayer non-zero-sum game, Nash equilibrium, 
non-cooperative rational synthesis, stationary strategy
\end{keywords}

\section{Introduction}

Synthesis problems on multiplayer non-zero-sum games (abbreviated as MG) 
between the system and multiple environment players
are the problems to find a good strategy of the system 
and have been extensively studied. 
An MG is generally stochastic in the sense that 
a special controller other than players, called \emph{nature},
which chooses a move in its turn randomly, may exist and 
a strategy that each player takes can be probabilistic. 
If there is no nature and every strategy is deterministic, 
the game is called non-stochastic.  
%A general MG is also called a stochastic MG (abbreviated as SMG).  

Two types of synthesis problems exist: 
cooperative rational synthesis problem (CRSP) and 
non-cooperative rational synthesis problem (NCRSP)\@.
The rationality of environment players is modeled by Nash equilibria, and
CRSP is the problem to decide whether there exists
a Nash equilibrium that gives the system a payoff not less than
a given threshold.
Ummels et~al.\ \cite{UW11} studied the complexity of CRSP for various classes of
objectives and strategies of players.
CRSP fits the situation where the system can make a suggestion of a strategy
profile (a tuple of strategies of all players) to the environment players.
However, in real applications, the system may rarely have an opportunity 
to make suggestions to the environment, and thus CRSP is optimistic.
NCRSP is 
the problem to decide whether there exists a strategy $\sigma_0$
of the system satisfying that
for \emph{every} strategy profile of the environment players
that forms a 0-fixed Nash equilibrium (a Nash equilibrium where
the system's strategy is fixed to $\sigma_0$),
the system obtains a payoff not less than a given threshold.

Condurache et al.~\cite{CFGR16} have investigated the complexity of CRSP and NCRSP
for {\em non-stochastic} MG. 
In \cite{KTS24}, the authors studied the complexity of NCRSP for {\em stochastic} MG, 
but the strategies are restricted to be positional, i.e., 
every strategy has no memory and is deterministic. 

This paper investigates the decidability and complexity of NCRSP
for some classes of strategies broader than positional ones. 
We show that NCRSP for stationary strategies and Muller objectives 
is in 3-EXPTIME, 
and if we restrict the strategies of environment players to be positional, 
NCRSP becomes NEXPSPACE solvable. 
On the other hand, NCRSP$_{>}$, which is a variant of NCRSP, is 
shown to be undecidable even for pure finite-state strategies and 
terminal reachability objectives. 
Finally, we show that NCRSP becomes EXPTIME solvable if we 
restrict the memory of a strategy to be the most recently visited
$t$ vertices where $t$ is linear in the size of the game.

\section{Basic Definitions}

In this section, we define
stochastic multiplayer non-zero-sum game (abbreviated as SMG)
in Sect.~\ref{sec:objectives}
and %two variants of synthesis problems on SMG,
%cooperative rational synthesis problem (CRSP) and
non-cooperative rational synthesis problem (NCRSP) in Sect.~\ref{sec:problems}.
An SMG is a tuple of a game arena (Sect.~\ref{sec:arena})
and winning objectives (Sect.~\ref{sec:objectives}) for all players.
The complexity 
of NCRSP for stationary strategies (Sect.~\ref{sec:strategies})
and other strategy classes 
are analyzed in Sect.~\ref{sec:stationary} and \ref{sec:others}.

\smallskip

First, we define some mathematical notations.
Let $\Nat=\{0, 1, 2, \ldots \}$ be the set of all natural numbers including 0.
For a set $A$,
let $A^{\omega}$ denote the set of all infinite sequences on $A$.
%For an infinite sequence $\pi=a_0a_1a_2\ldots$ and $n\in\Nat$, let $\pi(n) = a_n$.
%
A \emph{probability distribution function} on $A$
is a function $\delta: A \to [0,1]$ satisfying $\sum _{a\in A}\delta(a)=1$,
where $[0,1]$ is the set of real numbers between 0 and 1.
The family of all probability distribution functions on $A$ is denoted by $D(A)$.

%%%%%%%%%%%%%%%%%%%%%%%%%%%%%%%%%%%%%%%%%%%%%%%%%%%%%%%%%%%%%%%%%%%%%%%%%%%%%%%%%%%%%%%%%%%%%%%
%%%%%%%%%% Game Arena %%%%%%%%%%%%%%%%%%%%%%%%%%%%%%%%%%%%%%%%%%%%%%%%%%%%%%%%%%%%%%%%%%%%%%%%%
%%%%%%%%%%%%%%%%%%%%%%%%%%%%%%%%%%%%%%%%%%%%%%%%%%%%%%%%%%%%%%%%%%%%%%%%%%%%%%%%%%%%%%%%%%%%%%%

\subsection{Game Arena}
\label{sec:arena}

A \emph{game arena} (or simply, \emph{arena})
is intuitively
a finite directed graph where both of vertices and edges are labelled.
The label of a vertex expresses which controller controls the vertex.
The label of an edge $(u,v)$ is a rational number $r$ when the source node $u$
is controlled by \emph{nature},
and $r$ represents the transition probability
from the source vertex $u$ to the target vertex $v$.

For any $k\in\Nat$, we can define a ($k+1$) player game arena as 
$\calA=\langle \Omega, V, (V_i)_{i\in\Omega\cup\{\Nature\}},  
\Delta, v_0\rangle $ in which each component is defined as follows.
$\Omega=\{0, 1, \ldots, k\}$ is a finite set of players where 
player 0 corresponds to the \emph{system} and the other players $1$ to $k$
correspond to the \emph{environments}.
Let $\OmegaX{j} = \Omega\setminus{\{j\}}$.
$V$ is a finite set of vertices and $(V_i)_{i\in\Omega\cup\{\Nature\}}$
is a partition of $V$;
i.e, for each vertex $v$, there exists a player or nature $i$
such that $v$ is a member of $V_{i}$, and
for each distinct $i$ and $j$, $V_{i}\cap V_{j}=\emptyset$.
This means that every vertex is controlled by exactly one player or nature. 
If $i\in\Omega$, each member $v$ of $V_i$ is controlled by player $i$ and
the transition probabilities from $v$ to
the next vertices are determined by player $i$'s strategy
defined in Sect.~\ref{sec:strategies}.
If $i=\Nature$, each member $v$ of $V_i$ is controlled by a special controller
$\Nature$, and the transition probabilities from $v$ to the next vertices are
given \emph{a priori}.
Formally, $\Delta \subseteq
 \big(V_{\Nature} \times[0,1]\times V\big) \cup
 \big((\bigcup_{i\in\Omega} V_i)\times\{\bot\}\times V\big)$
is a transition relation.
It is assumed that for each $v\in V_{\Nature}$ and $w\in V$,
there exists exactly one rational number $p\in [0,1]$ such that
$(v,p,w)\in\Delta$.
For each $v\in V$, let
$\Next(v)=\{\, w\in V \mid \exists p \in (0,1]\cup\{\bot\}.$
$(v,p,w)\in \Delta \,\}$.
For each $v\in V$, $\Next(v)$ must not be empty.
We call $v\in V$ \emph{a terminal vertex} if $\Next(v)=\{v\}$;
i.e., $v$ is a terminal vertex if the self-loop $(v,v)$
is the only outgoing edge from $v$.
For each $v\in V_{\Nature}$,
let $\nextPr_v(w)=p$ for each $w\in V$ where
$p\in [0,1]$ is the rational number satisfying $(v,p,w)\in\Delta$.
We require that for every $v\in V_{\Nature}$,
$\nextPr_v$ should be
a probability distribution function on~$V$.
Finally, $v_0\in V$ is the \emph{initial vertex}.

A \emph{play} $\pi = v_0v_1v_2\ldots \in V^{\omega}$
is an infinite sequence
that starts with $v_0$ and
for every $i\in \Nat$, $v_{i+1}\in \Next(v_{i})$.
Let $\pi(i)$ denote $v_i$ for $\pi=v_0v_1v_2\ldots$ and $i\in\Nat$.
Let $\Inf(\pi)$ denote the set of vertices that appear in $\pi$ infinitely often. 

%%%\paragraph*{Graphical Notation of Game Arena}
%%% 削除（ファイル末尾参照）

%%%%%%%%%%%%%%%%%%%%%%%%%%%%%%%%%%%%%%%%%%%%%%%%%%%%%%%%%%%%%%%%%%%%%%%%%%%%%%%%%%%%%%%%%%%%%%%
%%%%%%%%%% Objectives %%%%%%%%%%%%%%%%%%%%%%%%%%%%%%%%%%%%%%%%%%%%%%%%%%%%%%%%%%%%%%%%%%%%%%%%%
%%%%%%%%%%%%%%%%%%%%%%%%%%%%%%%%%%%%%%%%%%%%%%%%%%%%%%%%%%%%%%%%%%%%%%%%%%%%%%%%%%%%%%%%%%%%%%%

\subsection{Objectives}
\label{sec:objectives}

In this paper, we assume that the result
a player obtains from a play is either a winning or a losing.
Since we are considering non-zero-sum games,
one player's winning does not mean other players' losing.
Each player has her own winning condition over plays,
and we model the condition as a subset $O$ of plays;
i.e., the player wins if the play belongs to the subset $O$.
%We call the subset
%$O\subseteq\Play$ the objective of that player.
%In this paper, we focus on the following
%important classes of objectives
%
A \emph{winning objective} (or simply, \emph{objective}) of player $i$
is a subset $O_{i}\subseteq V^{\omega}$
of infinite sequences of vertices.
If a play $\pi \in V^{\omega}$ is a member of player $i$'s winning objective $O_i$,
then the \emph{payoff} $\Pay_{i}(\pi)$ of player $i$ on $\pi$ is 1;
otherwise, $\Pay_{i}(\pi)=0$.
The tuple of all players' payoffs on a play $\pi$ is expressed by
$\Pay(\pi)=(\Pay_{0}(\pi), \Pay_{1}(\pi),\ldots ,\Pay_{k}(\pi))$.

In this paper, among major classes of objectives, 
we only deal with terminal reachability and Muller objectives. 
%
%The following classes of objectives are considered in this paper.

\begin{description}[font=\mdseries\itshape,nosep,topsep=\smallskipamount]
\item[Terminal reachability (TR) objective]
  is given by a subset $F$ of terminal vertices as
  $\TR(F)=\{\,\pi\in V^{\omega} \mid \exists n\geq 0$.\ $\pi(n)\in F\,\}$.
%\item[\Buchi\ objective]
%  is given by a subset $F\subseteq V$ as
%  $\Buchi(F)=\{\,\pi\in V^{\omega} \mid \Inf(\pi)\cap F\neq \emptyset\,\}$.
%\item[Co-\Buchi\ objective]
%  is given by a subset $F\subseteq V$ as
%  $\coBuchi(F)=\{\,\pi\in V^{\omega} \mid \Inf(\pi)\cap F= \emptyset\,\}$.
%\item[Parity objective]
%  is given by a coloring function $p:V\to \Nat$ as
%  $\Parity(p)=\{\,\pi\in V^{\omega} \mid
%   \min_{v\in\Inf(\pi)} p(v) \text{ is even}\,\}$.
%\item[Rabin objective]
%  is given by a subset $\Phi \subseteq 2^{V}\times 2^{V}$ as
%  $\Rabin(\Phi)=\bigcup _{(L,R)\in \Phi} \bigl(\Buchi(L)\cap \coBuchi(R)\bigr)$.
%\item[Streett objective]
%  is given by a subset $\Phi \subseteq 2^{V}\times 2^{V}$ as
%  $\Streett(\Phi)=\bigcap _{(L,R)\in \Phi} \bigl(\coBuchi(L)\cup \Buchi(R)\bigr)$.
\item[Muller objective]
  is given by a Boolean formula $\phi$ over $V$ as
  $\Muller(\phi)=\{\,\pi\in V^{\omega} \mid \Inf(\pi)\models\phi\,\}$,
  where
  $\Inf(\pi)\models\phi$ means that $\phi$ is evaluated to $1$
  under the truth assignment $\theta: V\rightarrow \{0,1\}$
  such that $\theta(v)=1 \Leftrightarrow v\in\Inf(\pi)$.
\end{description}

%Let $\OC{TR}$ be the family of all TR objectives, 
%and use the same notations for the other classes of objectives. 
%Among these families of objectives, the following relation holds:
%$\OC{TR} \subseteqP \OC{Buchi}, \OC{co}\mathchar`-\OC{Buchi}
% \subseteqP \OC{Parity}
% \subseteqP \OC{Rabin}, \OC{Streett}
% \subseteqP \OC{Muller}$,
%where
%$\OC{O}\subseteqP\OC{O}'$ means that
%$\OC{O}\subseteq\OC{O}'$ holds and for every $O\in\OC{O}$,
%the description of $O$ in $\OC{O}$
%can be translated into
%the description of $O$ in $\OC{O'}$
%in polynomial time.
%Figure~\ref{fig:objectives} shows the hierarchy of these classes of objectives.
%
%Moreover,
%$\overline{\Buchi(F)} = \coBuchi(F)$ for every $F\subseteq V$,
%where $\overline{O}$ means the complement of $O$.
%Similarly, $\OC{Rabin}$ and $\OC{Streett}$ are dual with each other,
%and $\OC{Parity}$ and $\OC{Muller}$ are dual with themselves.
%Finally, $\OC{TR}$ is not closed under the complementation.

\noindent
The classes of terminal reachability and Muller objectives are abbreviated 
as $\OC{TR}$ and $\OC{Muller}$, respectively.
Major classes of objectives such as $\OC{Buchi}$, $\OC{Parity}$
strictly includes $\OC{TR}$ and are strictly included in $\OC{Muller}$ ~\cite{UW11,CFGR16}. 
We will use $\OC{TR}$ and $\OC{Muller}$ to show 
a lower bound and an upper bound of the complexity of the problem, respectively. 

%%% Letter にするならこの段落はExamplesとともに削除
Note that although we have defined objectives to be Boolean, %(i.e., a winning or a losing),
because of the existence of the nature,
in effect it is the same as considering objectives giving
any rational number in $[0,1]$ as a payoff (see \cite[Sect. 2.2]{KTS24}).
%%%

\subsection{Games and Strategies}
\label{sec:strategies}

A \emph{stochastic multiplayer non-zero-sum game (SMG)}
is a pair $\calG = \langle\calA, (O_{i})_{i\in\Omega}\rangle$ of a game arena 
$\calA= \langle \Omega, V,\allowbreak (V_i)_{i\in\Omega\cup\{\Nature\}},  \Delta, v_0\rangle$ 
and winning objectives $(O_{i})_{i\in\Omega}$ for all players.
If $V_{\Nature}=\emptyset$, then we call $\calG$
a \emph{non-stochastic multiplayer non-zero-sum game (NSMG)}.
%If $\Omega=\emptyset$ then $\calG$ can be regarded as a Markov chain (MC), and 
%if $|\Omega|=1$ then $\calG$ can be regarded as a Markov decision process (MDP)\@.
If $|\Omega|=1$ then $\calG$ can be regarded as a Markov decision process (MDP)\@.

%%%%%%%%%%%%%%%%%%%%%%%%%%%%%%%%%%%%%%%%%%%%%%%%%%%%%%%%%%%%%%%%%%%%%%%%%%%%%%%%%%%%%%%%%%%%%%%
%%%%%%%%%% Strategies %%%%%%%%%%%%%%%%%%%%%%%%%%%%%%%%%%%%%%%%%%%%%%%%%%%%%%%%%%%%%%%%%%%%%%%%%
%%%%%%%%%%%%%%%%%%%%%%%%%%%%%%%%%%%%%%%%%%%%%%%%%%%%%%%%%%%%%%%%%%%%%%%%%%%%%%%%%%%%%%%%%%%%%%%

%\subsection{Strategies}
%\label{sec:strategies}

%Let ${\cal{G}} = \langle  {\cal{A}}, {{(O_{i})}_{i\in\Omega}} \rangle$ 
%be an SMG where 
%${\cal{A}} = \langle \Omega, V, (V_i)_{i\in\Omega\cup\{\Nature\}},  \Delta, v_0\rangle $. 
A \emph{strategy} of player $i$ is a function $\sigma_i: V^* V_i\rightarrow D(V)$. 
%The family of strategies is divided into six classes 
%by restricting memory size and stochastic behavior. 
%Classifications based on memory size include unrestricted, finite-state, and memoryless. 
A strategy with memory is formally defined by a Mealy machine \cite{UW11}, 
which we omit here. 
A strategy is {\em finite-state} if it is defined by a Mealy machine whose state set 
is finite.  
%
%For representing a strategy with memory, 
%we use a Mealy machine ${\cal M}=(M,m_0, \delta,\lambda)$ where 
%$M$ is a set of memory-states, 
%$m_0\in M$ is the initial memory-state, 
%$\delta: M \times V \to M$ is a memory-update function, 
%$\lambda: M \times V_i \to V$ is an output function. 
%For each $m\in M, x\in V^{\ast} \mbox{ and } v\in V_i$, 
%the domain of $\delta$ is extended by 
%$\delta(m,\varepsilon)=m$, $\delta(m, xv) = \delta(\delta(m,x),v)$, 
%and we can define the strategy $\sigma_i$ of a player $i$ 
%by $\sigma_i(xv) = \lambda(\delta(m_0,x),v)$. 
%When the set $M$ of memory-states is finite, ${\cal M}$ expresses a {\em finite memory} strategy. 
%If $|M|=1$, ${\cal M}$ expresses a {\em memoryless} strategy. 
%In other words, $\sigma_i$ is a memoryless strategy if 
%
A strategy $\sigma_i$ is {\em memoryless} if
for each $x,x'\in V^{\ast}$ and $v\in V_i$, the condition $\sigma_i(xv)=\sigma_i(x'v)$ holds. 
A memoryless strategy can be represented by a function $\sigma_i: V_i\rightarrow D(V)$. 
A strategy is non-stochastic (or {\em pure}) if 
$\forall i\in\Omega, \forall x\in V^{\ast}, \forall v\in V_{i}, 
\exists w\in V. \sigma_i(xv)(w)=1$ holds. 
A pure strategy can be expressed by a function $\sigma_i: V^* V_i\rightarrow V$. 
By the above definition, the set of strategies can be categorized into six classes:  
unrestricted, pure, finite-state, pure finite-state, memoryless (or {\em stationary}) 
and pure memoryless (or {\em positional}) strategies. 
Restricting the class of strategies to one of the above mentioned six subclasses, 
the problem NCRSP, formally defined later, can also be categorized into six subproblems. 
In Sect.~\ref{sec:stationary} and \ref{sec:others}, 
we also consider some strategy classes other than the above six classes. 

A \emph{strategy profile} is a tuple $\barsigma=(\sigma_i)_{i\in\Omega}$
where $\sigma_i$ is a strategy of player~$i$.
By $\barsigma_{-j}$ we mean the tuple $(\sigma_i)_{i\in\OmegaX{j}}$.
Specifically, $\barsigma_{-0}$ is a tuple of strategies of all environments' strategies.
We call $\barsigma$ (or $\barsigma_{-0}$) positional 
if every player's strategy in $\barsigma$ (or $\barsigma_{-0}$) is positional 
and the same naming convention applies for other strategy classes. 
When $\barsigma=(\sigma_i)_{i\in\Omega}$ is given for all players, 
together with the transition probabilities of nature, 
the transition probability from $u$ to $v$ for every $u,v\in V$ is fixed.
Once the transition probabilities are fixed, 
we can define the probability $\PR^{\barsigma}(O)$ for each $O\subseteq V^{\omega}$ 
\cite[Sect.~2.3.2]{UW11}. 
Intuitively speaking, $\PR^{\barsigma}(O)$ means the probability of executing plays in $O$ 
when the game starts at $v_0$ and follows the strategy profile $\barsigma$. 
The notation $\PR_{v}^{\barsigma}(O)$ has the same meaning as $\PR^{\barsigma}(O)$ 
except that the initial vertex is $v$. 
The \emph{payoff} of player $i$ taken by a strategy profile $\barsigma$ is
$\Pay_i (\barsigma)=\PR^{\barsigma}(O_i)$ where $O_i$ is the winning objective of player $i$. 
The \emph{payoff profile} is the tuple 
$\Pay(\barsigma)=(\Pay_{0}(\barsigma), \Pay_{1}(\barsigma),\ldots ,\Pay_{k}(\barsigma))$
of the payoffs of all players.

%%%%%%%%%%%%%%%%%%%%%%%%%%%%%%%%%%%%%%%%%%%%%%%%%%%%%%%%%%%%%%%%%%%%%%%%%%%%%%%%%%%%%%%%%%%%%%%
%%%%%%%%%% Nash Equilibria %%%%%%%%%%%%%%%%%%%%%%%%%%%%%%%%%%%%%%%%%%%%%%%%%%%%%%%%%%%%%%%%%%%%
%%%%%%%%%%%%%%%%%%%%%%%%%%%%%%%%%%%%%%%%%%%%%%%%%%%%%%%%%%%%%%%%%%%%%%%%%%%%%%%%%%%%%%%%%%%%%%%

\subsection{Nash Equilibria}
\label{sec:nash}

In order to represent the rational behaviors of environment players,
we adopt the concept of Nash equilibrium.
Given a strategy profile $\barsigma$ and a strategy $\tau$ for player $i$, 
we define $(\barsigma_{-i}, \tau)$ as the strategy profile where
player $i$ takes the strategy $\tau$ and the others follow $\barsigma$.

A strategy profile $\barsigma$ is a \emph{Nash equilibrium} if
no player can improve her payoff by changing her strategy alone; i.e.,
$\forall i\in\Omega$, $\forall\tau.$
$\Pay_i (\barsigma_{-i},\tau)\leq \Pay_i (\barsigma)$.
If $\barsigma$ is a Nash equilibrium, we write $\NE(\barsigma)$.
If $\NE(\barsigma)$ does not hold,
then there exists a player $i\in\Omega$ and her alternative strategy $\tau$
such that 
she can gain a better payoff by switching her strategy to $\tau$; i.e.,
$\Pay_i (\barsigma_{-i},\tau)> \Pay_i (\barsigma)$.
In this case,
$\tau$ (or $(\barsigma_{-i},\tau)$) is called
a \emph{profitable deviation} of $i$.
Similarly, a strategy profile $\barsigma$ is a \emph{0-fixed Nash equilibrium} if 
no environment player (not including the system) can improve her payoff
by changing her strategy alone; i.e.,
$\forall i\in\OmegaX{0}$, $\forall\tau.$
$\Pay_i (\barsigma_{-i},\tau)\leq \Pay_i (\barsigma)$.
If $\barsigma$ is a 0-fixed Nash equilibrium, we write $\ZNE(\barsigma)$. 
A profitable deviation is defined for 0-fixed Nash equilibria in the same way.

A 0-fixed Nash equilibrium fits the situation where the system decides her strategy in advance and 
the environment players then decide their strategies subject to rationality restrictions. 
In this paper, we choose Nash equilibria as the rationality constraint
and define NCRSP based on them in the next subsection.

%%%%%%%%%%%%%%%%%%%%%%%%%%%%%%%%%%%%%%%%%%%%%%%%%%%%%%%%%%%%%%%%%%%%%%%%%%%%%%%%%%%%%%%%%%%%%%%
%%%%%%%%%% Rational Synthesis Problems %%%%%%%%%%%%%%%%%%%%%%%%%%%%%%%%%%%%%%%%%%%%%%%%%%%%%%%%
%%%%%%%%%%%%%%%%%%%%%%%%%%%%%%%%%%%%%%%%%%%%%%%%%%%%%%%%%%%%%%%%%%%%%%%%%%%%%%%%%%%%%%%%%%%%%%%

\subsection{Rational Synthesis Problems}
\label{sec:problems}

A rational synthesis problem aims at finding a desirable strategy of the system 
under the assumption that the environment players behave rationally. 
The synthesis problem based on SMG has two variants:
cooperative rational synthesis problem and
non-cooperative rational synthesis problem.

\emph{Cooperative rational synthesis problem (CRSP)} is defined as follows:
Given an SMG $\calG$ and a rational number $\mu\in [0,1]$ that indicates 
the minimum payoff that the system must meet,
is there a Nash equilibrium $\barsigma$ 
where the system's payoff gained from $\barsigma$ is not less than $\mu$?
Formally, CRSP is the decision problem to decide 
$\exists \barsigma.$ $(\NE(\barsigma) \wedge \Pay_0(\barsigma)\ge\mu)$.
In general, more than one Nash equilibrium can exist;
CRSP only asks if there exists at least one Nash equilibrium
satisfying the conditions.

\emph{Non-cooperative rational synthesis problem (NCRSP)}
on the other hand is defined as follows:
Given $\calG$ and $\mu$,
is there a strategy $\sigma_0$ of the system
such that for each 0-fixed Nash equilibrium $\barsigma$ containing $\sigma_0$,
the system's payoff gained from $\barsigma$
always greater than or equal to $\mu$?
Formally, NCRSP is the decision problem to decide 
$\exists \sigma_0$, $\forall \barsigma_{-0}.$
$(\ZNE(\barsigma) \Rightarrow \Pay_0(\barsigma)\ge\mu)$
where $\barsigma=(\sigma_i)_{i\in\Omega}$.

The definition of CRSP fits the situation where 
the system can make a suggestion of a strategy profile to the environment players.
In CRSP, it is assumed that
the environment players follow the suggested strategy profile
if it is a Nash equilibrium.
However, in real applications, the system may rarely have an opportunity to make suggestions to the environment, 
and thus CRSP is optimistic.

In \cite{KTS24}, 
the complexity of NCRSP for positional strategies is analyzed. 
In this paper, we analyze the decidability and complexity of subproblems of NCRSP
for strategies more general than positional ones. 
We provide the naming conventions for subproblems of NCRSP. 
NCRSP with stationary strategies and Muller objectives for all players is denoted by 
Stationary-Muller-NCRSP\@.
For other winning objective classes, the same conventions apply.
We write NNCRSP to denote the NCRSP on NSMG.

%%%%%%%%%%%%%%%%%%%%%%%%%%%%%%%%%%%%%%%%%%%%%%%%%%%%%%%%%%%%%%%%%%%%%%%%%%%%%%%%%%%%%%%%%%%%%%%
%%%%%%%%%% Stationary-NCRSP %%%%%%%%%%%%%%%%%%%%%%%%%%%%%%%%%%%%%%%%%%%%%%%%%%%%%%%%%%%%%%%%%%%
%%%%%%%%%%%%%%%%%%%%%%%%%%%%%%%%%%%%%%%%%%%%%%%%%%%%%%%%%%%%%%%%%%%%%%%%%%%%%%%%%%%%%%%%%%%%%%%

\section{Stationary-NCRSP}
\label{sec:stationary}
In this section, 
we first analyze the complexity of Stationary-NCRSP where all players have stationary strategies. 
Next, the complexity of Stationary-Positional-NCRSP where only the system can behave randomly and 
each environment takes a positional strategy is analyzed. 

\subsection{Stationary-NCRSP}
We first give some additional definitions and basic analysis methods
used for proving the main theorem. 

%For an arena 
%${\cal A} = \langle \Omega, V, (V_i)_{i\in\Omega\cup\{\Nature\}},  \Delta, v_0\rangle$, 
%a strategy profile $\barsigma$ 
%and an environment player $j\in\OmegaX{0}$,
%%let ${\calA}^{\barsigma_{-j}}=\langle \{j\}, V,\allowbreak (V'_i)_{i\in\{j,\Nature\}}$, 
%let ${\calA}^{\barsigma_{-j}}=\langle \{j\}, V, (V'_i)_{i\in\{j,\Nature\}}$, 
%$\Delta^{\barsigma_{-j}}, v_0\rangle$ 
%where 
%$V'_j = V_j$, 
%$V'_{\Nature} = V \setminus V_j$, 
%$\Delta^{\barsigma_{-j}}=\{\, (v,\sigma_i(v)(w),w)\mid i\in\OmegaX{j}, 
%%v\in V_{i}, \sigma_i(v)(w)>0 \,\}\allowbreak
%v\in V_{i}, \sigma_i(v)(w)>0 \,\}
%\cup \{\,(v,\bot,w)\in\Delta \mid v\in V_j, w\in V \,\}
%\cup \{\,(v,p,w)\in\Delta \mid v\in V_{\Nature}, p\in (0,1], w\in V \,\}$.
%${\calA}^{\barsigma_{-j}}$ is an MDP with a single player $j$
%in which the other players' behaviors are hard-coded in the arena according to $\barsigma_{-j}$. 

For an arena ${\cal A}$, 
if a strategy profile ${\barsigma_{-i}}$ of all the players other than $i\in \Omega$ is given
and we consider plays according to ${\barsigma_{-i}}$, then 
${\cal A}$ can be regarded as an MDP with a single player $i$. 
For an MDP $\cal{M}$ with vertex set $V$ and a single player $i$, 
we say that $E\subseteq V$ is an {\em end component} of $\cal{M}$ 
if the following three conditions are satisfied: 
(i) $E$ is strongly connected; 
(ii) $\Next(v)\cap E\ne\emptyset$ for each $v\in E$; 
(iii) $\Next(v)\subseteq E$ for each vertex $v\in V\setminus V_i$. 
Let $\EC_{\cal{M}}$ denote the set of all end components of $\cal{M}$.
For a given end component $E$ of an MDP $\cal{M}$, 
we can construct a stationary strategy for $i$ such that if a play starts at $v\in E$, 
the play almost surely visits all and only the vertices in $E$ infinitely often.  
Formally, the next proposition holds. (See \cite[Lemma 2.3]{UW11}.) 

\begin{proposition}\label{EC}
Let ${\cal{M}}$ be an MDP with a player $i$. 
There exists a stationary strategy $\tau$ for player $i$ that satisfies:\\
$\forall E\in \EC_{\cal{M}},\forall v\in E$, 
     $Pr^{\tau}_{v}(\{\pi\in V^{\omega}\mid \Inf(\pi)=E\})=1.$
\end{proposition}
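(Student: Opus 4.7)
The plan is to construct, given an end component $E\in\EC_{\mathcal{M}}$, an explicit stationary strategy $\tau$ for player $i$ whose induced Markov chain has $E$ as a bottom strongly connected component (BSCC), and then invoke the standard finite-Markov-chain fact that, from any state of a BSCC, the play visits exactly the vertices of that BSCC infinitely often with probability~$1$. The construction exploits the three defining properties of an end component: (i) strong connectivity of $E$, (ii) existence of some successor inside $E$ at every vertex of $E$, and (iii) closure of $E$ under all transitions from vertices not owned by player~$i$.

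The specific $\tau$ I would choose is: at each $v\in V_i\cap E$, let $\tau(v)$ be the uniform distribution on $\Next(v)\cap E$, which is non-empty by (ii); at each $v\in V_i\setminus E$, let $\tau(v)$ be arbitrary (say, uniform on $\Next(v)$). I would then carry out two structural steps. First, a straightforward induction on the length of the play prefix shows that every play started inside $E$ stays in $E$ under $\tau$: at player-$i$ vertices $\tau$ only picks successors in $E$ by construction, and at the remaining vertices of $E$ all successors lie in $E$ by (iii). Second, I would verify that the Markov chain induced by $\tau$, restricted to $E$, is strongly connected. Given $u,w\in E$, condition (i) yields a path $u=v_0,\dots,v_n=w$ inside $E$, and every edge of this path has positive probability under $\tau$: nature transitions have probability $\nextPr_{v_j}(v_{j+1})>0$ because $v_{j+1}\in\Next(v_j)$, and player-$i$ transitions have probability $1/|\Next(v_j)\cap E|>0$ because $v_{j+1}\in\Next(v_j)\cap E$. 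Together these two facts say that $E$ is a BSCC of the induced chain.

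Finally, I would invoke the classical recurrence result for finite Markov chains to conclude that from any $v\in E$ the play almost surely satisfies $\Inf(\pi)=E$, which is the required equality. The point that needs the most care is the second step: strong connectivity of the induced chain on $E$ must combine the transitions controlled by the strategy $\tau$ with those fixed a priori by $\nextPr$, and the decision to take the uniform distribution on $\Next(v)\cap E$ rather than on all of $\Next(v)$ is precisely what simultaneously keeps the play in $E$ (for closure) and allows arbitrary paths within $E$ to be traversed with positive probability (for strong connectivity). Note also that, read literally, the statement as written looks like a single $\tau$ serving all $E$; I would interpret it in the standard UW11 sense, i.e., $\tau$ depends on the given $E$, which is the statement actually supported by the construction.
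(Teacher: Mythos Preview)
The paper does not supply its own proof of this proposition; it simply cites \cite[Lemma~2.3]{UW11}. Your construction is exactly the standard one underlying that lemma: take the uniform distribution on $\Next(v)\cap E$ at player-$i$ vertices inside $E$, verify that the induced chain restricted to $E$ is closed and irreducible, and appeal to recurrence in finite Markov chains. In that sense your approach and the paper's (delegated) argument coincide.

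Your final caveat is well taken and is more than a matter of interpretation: the proposition as literally written, with a single $\tau$ serving \emph{all} $E\in\EC_{\mathcal{M}}$ simultaneously, is in general false. Whenever two distinct end components $E_1\subsetneq E_2$ share a vertex $v$ (which the definition of $\EC_{\mathcal{M}}$ as ``all end components'' certainly allows), the requirement would force both $Pr^{\tau}_v(\{\pi\mid\Inf(\pi)=E_1\})=1$ and $Pr^{\tau}_v(\{\pi\mid\Inf(\pi)=E_2\})=1$, which is impossible. The per-$E$ version you prove is precisely what the paper actually uses later (to justify the clause $\bigwedge_{v\in E^{S}_{i}} r^i_v=1$ in $\psi^r_{i,S}$), so your reading is the right one and your argument is complete for the statement that is needed.
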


%By computing $\EC_{{\cal{A}}^{\barsigma_{-i}}}$, 
We can decide whether player $i$ has a profitable deviation from $\barsigma$
based on Proposition \ref{EC}. 
(See \cite{BK08} for the theory of end components for MDP.) 
However, the number of possible stationary strategy profiles $\barsigma$ is infinite. 
Instead of an independent strategy profile, %$\barsigma$, 
we will use a support defined below. 
For a player $i$, a set 
$S_i \subseteq \{ (u,v) \mid u\in V_{i}, (u,\bot,v)\in\Delta \}$ 
is a {\em support} for $i$ 
if for each $u\in V_{i}$, at least one $v$ exists such that $(u,v)\in S_i$. 
Intuitively, a support for player $i$ is a subset of edges  
such that there is a strategy of player $i$ assigning 
a positive transition probability to all and only edges in the subset. 
Let $\SUPP^{\cal{A}}_0$ and $\SUPP^{\cal{A}}_{\OmegaX{0}}$ denote 
the set of all supports for the system and 
the set of the unions of supports for the environment players, respectively. 
Also, let $\SUPP^{\cal{A}} = 
\{ X \cup Y \mid X \in \SUPP^{\cal{A}}_0, Y \in \SUPP^{\cal{A}}_{\OmegaX{0}} \}$. 

For an arena ${\cal{A}}$ and $S \in \SUPP^{\cal{A}}$, 
let $S_{-i} = \{ (u,v) \in S \mid u \not\in V_i \}$. 
We define 
the subarena ${\cal{A}}^{S}$ and ${\cal{A}}^{S_{-i}}$ of ${\cal{A}}$  
by restricting edges to $S$ and $S_{-i}$, respectively: 
\begin{quote}
${\cal{A}}^{S}=\langle \Omega, V, (V_i)_{i\in\Omega\cup\{\Nature\}}, 
\Delta\setminus \Delta^{-S}, v_0\rangle $ where 
$\Delta^{-S}=\{(u,\bot,v)\mid (u,\bot,v)\in \Delta, (u,v)\notin S\}$ and \\
${\cal{A}}^{S_{-i}}$ is defined in a similar way. 
%
%${\cal{A}}^{S_{-i}}=\langle \Omega, V, (V_i)_{i\in\Omega\cup\{\Nature\}}, 
%\Delta\setminus \Delta^{-S_{-i}}, v_0\rangle $ where 
%$\Delta^{-S_{-i}}=\{(u,\bot,v)\mid 
%  (u,\bot,v)\in \Delta, j\in \Omega\setminus\{i\}, u\in V_{j}, (u,v)\notin S\}$. 
\end{quote}
\noindent
%${\cal{A}}^{S}$ is obtained from $\cal{A}$ by pruning the edges not chosen by $S$.
To analyze the complexity of Stationary-NCRSP, we use the following property.  

\begin{proposition}[\hspace{1sp}\cite{C75}]
\label{EAFL}
The decision problem of the validity of a first-order sentence 
in the theory of real numbers with order ${\cal{R}}:=(R,+,\times, 0, 1,\leq)$ is in 2-EXPTIME. 
\end{proposition}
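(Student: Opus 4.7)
The plan is to appeal to Collins' cylindrical algebraic decomposition (CAD) method for the first-order theory of real closed fields. First, I would put the input sentence into prenex normal form $Q_1 x_1 \cdots Q_n x_n\, \psi(x_1,\ldots,x_n)$, where $\psi$ is quantifier-free and built from polynomial (in)equalities with integer coefficients. Let $m$ denote the number of polynomials and $d$ their maximum total degree in $\psi$.

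Second, I would construct a CAD of $R^n$ adapted to $\psi$: a finite partition of $R^n$ into semi-algebraic cells such that (i) every polynomial appearing in $\psi$ has constant sign on each cell, and (ii) for every $i\le n$, the projection of the cells to the first $i$ coordinates forms a well-defined partition of $R^i$. The construction alternates a \emph{projection phase}, which eliminates the variables $x_n, x_{n-1}, \ldots, x_1$ in turn via subresultants, discriminants, and leading coefficients, with a \emph{lifting phase} that reconstructs cells and representative sample points level by level via real root isolation.

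Third, the sentence is evaluated from the innermost quantifier outward: for each cell $C$ in the CAD of $R^{i-1}$, the truth value of $Q_i x_i\, \varphi$ on the cylinder above $C$ is obtained by taking the disjunction or the conjunction (according to $Q_i$) of the truth values at sample points of the cells lying directly over $C$. Iterating up to $i=1$ yields the truth value at the unique $0$-dimensional cell, which is the answer.

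The dominant cost, and the source of the 2-EXPTIME bound, is the doubly exponential blow-up in the iterated projection: after $n$ projections the number of projection polynomials and their degrees can reach $(md)^{2^{O(n)}}$, and the cell count is bounded similarly. Each individual step (polynomial arithmetic, sample-point construction, sign evaluation) is polynomial in the current data size, so the total running time stays within $2^{2^{\mathrm{poly}(|\varphi|)}}$. The hard part of the argument, which I would cite directly rather than redevelop, is the precise choice of projection operator that preserves cylindricity together with the careful bookkeeping that yields the doubly exponential bound; this is exactly the content of Collins' theorem.
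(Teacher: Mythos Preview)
Your outline of Collins' cylindrical algebraic decomposition is an accurate sketch of the argument behind the proposition, and the complexity analysis leading to the doubly exponential bound is correctly identified. The paper, however, does not prove this proposition at all: it is stated as a citation to \cite{C75} and used as a black box in the proof of Theorem~\ref{SMN}, so your write-up goes well beyond what the paper does and simply unpacks the content of the cited reference.
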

The next theorem is the main result of this section. 
The design of the first-order formulas in the proof is inspired by \cite[Theorem 4.5]{UW11}. 
\begin{theorem}\label{SMN}
Stationary-Muller-NCRSP is in 3-EXPTIME. 
\end{theorem}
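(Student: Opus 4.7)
Following the template of \cite[Theorem 4.5]{UW11} but adapted to the $\exists\forall$ quantifier alternation of NCRSP, I would reduce Stationary-Muller-NCRSP to deciding a sentence in the first-order theory of the reals and then invoke Proposition~\ref{EAFL}. A stationary strategy is determined by its support together with the concrete probabilities on that support, so the problem is equivalent to a disjunction over $S_0\in\SUPP^{\calA}_0$ and a conjunction over $S_{-0}\in\SUPP^{\calA}_{\OmegaX{0}}$ of the formula
\[
\exists\vec x\,\bigl(\Phi_{\mathrm{dist}}(S_0,\vec x)\wedge\forall\vec y\,(\Phi_{\mathrm{dist}}(S_{-0},\vec y)\rightarrow\Psi(S_0,S_{-0},\vec x,\vec y))\bigr),
\]
where $\vec x,\vec y$ are real variables holding the transition probabilities, $\Phi_{\mathrm{dist}}$ enforces that those probabilities form distributions whose support is exactly the prescribed edge set, and $\Psi$ encodes $\ZNE(\barsigma)\Rightarrow\Pay_0(\barsigma)\ge\mu$. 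The outer disjunction and conjunction together contribute $2^{O(|\Delta|)}$ branches.

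\textbf{Encoding payoffs and the 0-fixed NE condition.} Once the combined support $S_0\cup S_{-0}$ is fixed, the Markov chain induced by any compatible $\barsigma$ has a combinatorially determined graph, so its BSCCs depend only on the supports. By Proposition~\ref{EC} and standard Markov-chain absorption, for every player $j$ with Muller formula $\phi_j$,
\[
\Pay_j(\barsigma)=\sum\bigl\{\PR^{\barsigma}_{v_0}(\text{reach }B)\bigm| B\in\BSCC(\calA^{S_0\cup S_{-0}}),\ B\models\phi_j\bigr\},
\]
and each reachability probability satisfies a linear system in $(\vec x,\vec y)$, yielding a polynomial-size FO sub-formula once auxiliary variables for the reachability probabilities are introduced. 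For the 0-fixed NE condition, for every $i\in\OmegaX{0}$ I need to state $\sup_\tau\Pay_i(\barsigma_{-i},\tau)\le\Pay_i(\barsigma)$. Standard MDP theory for $\omega$-regular objectives asserts that this supremum is attained and equals the maximum probability of reaching an end component $E\in\EC$ with $E\models\phi_i$ of the MDP obtained by freeing the vertices of $V_i$. Enumerating those end components (at most exponentially many) combinatorially from the supports and expressing max reachability by a Bellman-type system of inequalities gives a second FO sub-formula whose size is exponential in $|V|$.

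\textbf{Complexity and main obstacle.} Assembling the pieces produces one first-order sentence over $(\mathbb{R},+,\times,0,1,\le)$ of size $N=2^{O(\mathrm{poly}(|\calG|))}$, since both the outer support enumeration and the inner end-component enumeration contribute exponential blow-ups. By Proposition~\ref{EAFL}, such a sentence is decidable in time $2^{2^{\mathrm{poly}(N)}}$, which after substitution is $2^{2^{2^{\mathrm{poly}(|\calG|)}}}$, giving the claimed 3-EXPTIME bound. The main obstacle will be encoding $\sup_\tau\Pay_i(\barsigma_{-i},\tau)$ faithfully: deviations in the definition of Nash equilibrium range over arbitrary (not necessarily stationary or even finite-memory) strategies, so I cannot simply quantify over $\tau$ as a real-variable vector. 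The workaround is to rely on the end-component characterization of MDP values to stay within first-order arithmetic, taking care that the Bellman-style encoding captures the attained maximum rather than merely a feasible value.
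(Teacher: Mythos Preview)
Your proposal is correct and follows essentially the same route as the paper: enumerate supports $S_0$ and $S_{-0}$, express the induced payoffs via BSCCs of $\calA^{S_0\cup S_{-0}}$ and the best-response values via end components of the MDP $\calA^{(S_0\cup S_{-0})_{-i}}$, assemble everything into a first-order sentence over the reals of exponential size, and invoke Proposition~\ref{EAFL}. The paper's construction differs only in presentation: rather than enumerating end components individually, it precomputes the union $E^S_i=\bigcup\{E\in\EC_{\calA^{S_{-i}}}\mid E\models\phi_i\}$ and encodes the optimal deviation value by a single Bellman-type system $\psi^r_{i,S}$ (with $r^i_v=1$ on $E^S_i$, $r^i_v\ge r^i_w$ at $V_i$, weighted sums elsewhere), so each subformula stays polynomial and the exponential blow-up comes solely from the outer support enumeration---but the overall size and the resulting 3-EXPTIME bound are the same as yours.
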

\begin{proof}
Assume that we are given 
an SMG ${\cal{G}} = \langle  {\cal{A}}, \allowbreak {{(O_{i})}_{i\in\Omega}} \rangle$ where 
${\cal A}= \langle \Omega, V, (V_i)_{i\in\Omega\cup\{\Nature\}}, \Delta, v_0\rangle $ 
and a rational number $\mu$. 
We construct a first-order sentence meaning that the answer of 
Stationary-Muller-NCRSP for the given ${\cal{G}}$ and $\mu$ is yes. 
%
%For an arena $\cal{A}$ whose vertex set is $V$,  
A {\em bottom strongly connected component} of $\cal{A}$ is  
a strongly connected component $C$ of $\cal{A}$
such that $\Next(c)\subseteq C$ holds for each $c\in C$.

We first compute $\SUPP^{\cal{A}}_0$, $\SUPP^{\cal{A}}_{\OmegaX{0}}$, $\SUPP^{\cal{A}}$.
Next, for each $S\in\SUPP^{\cal{A}}$, we compute 
the set of all bottom strongly connected components of $\cal{A}^{S}$, 
denoted as $\BSCC_{{\cal{A}}^{S}}$, and 
$\EC_{{\cal{A}}^{S_{-i}}}$ for each $i\in\Omega$. 
Each of the above sets can be computed in PTIME \cite{BK08}. 

Next, for each $S \in \SUPP^{\cal{A}}$ and environment player $i\in\OmegaX{0}$, 
we compute the following three sets. (See the proof of \cite[Theorem 4.5]{UW11}.) %\\
%$B^{S}_{i}$, $R^{S}_{i}$, $E^{S}_{i}$:  
\begin{quote}
$B^{S}_{i}=\bigcup_{C\in \BSCC_{{\cal{A}}^{S}}, C\models\phi_{i}}{C}$,\\
$R^{S}_{i} = \{ v\in V \mid 
  \mbox{$v$ is reachable to $B^{S}_{i}$ on ${\cal{A}}^{S}$} \}$,\\ 
$E^{S}_{i}=\bigcup_{E\in \EC_{{\cal{A}}^{S_{-i}}}, E\models\phi_{i}}{E}$. 
\end{quote}
%%%
%%%Let $B^{S}_{i}$ be 
%%%Let $B^{S}_{i}=B^{S_{0},S_{\OmegaX{0}}}_{i}$ be 
%%%the union of $C\in \BSCC_{{\cal{A}}^{S}}$ where $C\models\phi_{i}$. 
%%%Let $R^{S}_{i}$ be 
%%%Let $R^{S}_{i}=R^{S_{0},S_{\OmegaX{0}}}_{i}$ be 
%%%the set of vertices that are reachable to $B^{S}_{i}$ on MC arena ${\cal{A}}^{S}$. 
%%%Let $E^{S}_{i}$ be 
%%%Let $E^{S}_{i}=E^{S_{0},S_{\OmegaX{0}}}_{i}$ be 
%%%the union of $E\in \EC_{{\cal{A}}^{S_{-i}}}$ where $E\models\phi_{i}$. 

Let $\alpha^{0}=(\alpha_{vw})_{v\in V_{0},w\in V}$, 
$\alpha^{-0}=(\alpha_{vw})_{i\in\OmegaX{0},v\in V_{i},w\in V}$, 
$\barz=(z^{i}_{v})_{i\in\Omega,v\in V}$, and
$\barr=(r^{i}_{v})_{i\in\OmegaX{0},v\in V}$ 
be sets of variables. 
Let $\baralpha = \alpha^{0}\cup\alpha^{-0}$. 
Informally, each variable has the following meaning: 
$\alpha_{vw}$ expresses the probability of transiting from vertex $v$ to vertex $w$, 
$z^{i}_{v}$ represents the payoff of player $i$ if the game starts at vertex $v$ 
and $r^{i}_{v}$ represents the maximum payoff of player $i$ 
if the game starts at vertex $v$ and only player $i$ can change her own strategy. 
%Also let $S_0$, $S_{-0}$ and $S$ be variables representing supports in 
%$\SUPP^{\cal{A}}_0$, $\SUPP^{\cal{A}}_{\OmegaX{0}}$ and $\SUPP^{\cal{A}}$, respectively. 

We construct the following four types of first-order formulas. 
\begin{eqnarray*}
\lefteqn{
\psisys_{S_0}(\alpha^0) := \bigwedge_{v\in V_0}(\bigwedge_{w\in \Next(v)}\alpha_{vw}\ge 0\land 
}\\
& & \bigwedge_{w\in V\setminus \Next(v)}\alpha_{vw}=0 \land \sum_{w\in \Next(v)}\alpha_{vw}=1)\land\\ 
& & \bigwedge_{(v,w)\in S_0}\alpha_{vw}>0 \land \bigwedge_{(v,w)\notin S_0,v\in V_0}\alpha_{vw}=0. 
\end{eqnarray*}
\noindent
(Variables $\alpha_{vw} \in \alpha^0$ have real values as probabilities and 
are consistent with support $S_0$ for the system.) 
\begin{eqnarray*}
\lefteqn{ \psienv_{S_{-0}}(\alpha^{-0}) }\\
&:=& \bigwedge_{v\in V_i, i\in\OmegaX{0}}(\bigwedge_{w\in \Next(v)}\alpha_{vw}\ge 0\land\\ 
&  & \bigwedge_{w\in V\setminus \Next(v)}\alpha_{vw}=0 \land\sum_{w\in \Next(v)}\alpha_{vw}=1)\land\\
&  & \bigwedge_{v\in V_{\Nature},(v,p,w)\in \Delta}\alpha_{vw}=p\land
     \bigwedge_{(v,w)\in S_{-0}} \alpha_{vw}>0 \land \\
&  & \bigwedge_{(v,w)\notin S_{-0},v\in V_j,j\in\OmegaX{0}} \alpha_{vw}=0. 
\end{eqnarray*}
\noindent
(Similarly to the first formula, 
$\alpha_{vw} \in \alpha^{-0}$ have real values as probabilities and 
are consistent with support $S_{-0}$ for the environment players.) 
\medskip\par\noindent
For each $i\in\Omega$, 
\begin{eqnarray*}
\lefteqn{ \psi^z_{i,S}(\baralpha,\barz) }\\
&:=& \bigwedge_{v\in B^{S}_{i}}z^i_v =1\land \bigwedge_{v\in V\setminus R^S_i}z^i_v =0\land\\
&  & \bigwedge_{v\in R^S_i\setminus B^S_i}z^i_v =\sum_{w\in \Next(v)}\alpha_{vw}\times z^i_w. 
\end{eqnarray*}
\noindent
(If $v$ is in a bottom strongly connected component that satisfies the objective of player $i$ 
then $z^i_v$ is 1;  
else if $v$ is reachable to such a component 
     then $z^i_v$ is determined by the transition probabilities to 
          the next vertices and their payoffs; 
else $z^i_v$ is 0.) 
\medskip\par\noindent
For each $i\in\OmegaX{0}$, 
\begin{eqnarray*}
\lefteqn{ \psi^r_{i,S}(\baralpha,\barr) }\\
&:=& \bigwedge_{v\in V}r^i_v \ge 0 \land \bigwedge_{v\in E^{S}_{i}}r^i_v =1\land 
     \bigwedge_{v\in V_i ,w\in \Next(v)}r^i_v \ge r^i_w \land \\
&  & \bigwedge_{v\in V\setminus V_i}r^i_v =\sum_{w\in \Next(v)}\alpha_{vw}\times r^i_w. 
\end{eqnarray*}
\noindent
(This formula is similar to the third one 
except that only player $i$ can alter her strategy, which can be described by 
using end components instead of bottom strongly connected components.
See Proposition \ref{EC} for the correctness of the second term.) 
%
%The definition of the four first-order formula is inspired by \cite[Theorem 4.5]{UW11}. 
\medskip\par\noindent
Finally, we construct the following first-order sentence $\psi$. 
We can easily see that $\psi$ is true if and only if 
the answer of Stationary-Muller-NCRSP for ${\cal{G}}$ and $\mu$ is yes. 
\begin{align*}
\psi 
=& \bigvee_{S_{0}\in \SUPP^{\calA}_{0}} 
           \biggl( \exists \alpha^{0}\!.\ \psisys_{S_0}(\alpha^{0}) \land{}\\
 & \hspace{-0.3em}
   \bigwedge_{S_{-0}\in \SUPP^{\calA}_{\OmegaX{0}}}
      \Bigl( \forall \alpha^{-0}\, \forall \barz\, \forall \barr.\
    \bigl( \psienv_{S_{-0}}(\alpha^{-0})\land{}\\
 & \hspace{2em}
   \bigwedge_{i\in\Omega} \psi^{z}_{i,S_0\cup S_{-0}}(\baralpha,\barz)\land
    \bigwedge_{i\in\OmegaX{0}} \psi^{r}_{i,S_0\cup S_{-0}}(\baralpha,\barr)\bigr) \\
 & \hspace{1.5em}
   \implies
 \bigl(\bigwedge_{i\in\OmegaX{0}} z^i_{v_0}\ge r^i_{v_0} \implies z^0_{v_0}\ge\mu\bigr)\Bigr)\biggr)
\end{align*}
\noindent
(The subformula in the last line states that 
if no environment player $i$ can increase her payoff at $v_0$ by unilaterally changing her strategy, 
the payoff of the system at $v_0$ is at least $\mu$.)
\medskip\par\noindent
Sentence $\psi$ can be constructed in exponential time. 
By Proposition \ref{EAFL}, the validity of $\psi$ can be decided in 2-EXPTIME
in the size of $\psi$. 
Thus, this algorithm works in 3-EXPTIME. 
\end{proof}
%\end{comment}

%%%%%%%%%%%%%%%%%%%%%%%%%%%%%%%%%%%%%%%%%%%%%%%%%%%%%%%%%%%%%%%%%%%%%%%%%%%%%%%%%%%%%%%%%%%%%%%
%%%%%%%%%% Stationary-Positional-NCRSP %%%%%%%%%%%%%%%%%%%%%%%%%%%%%%%%%%%%%%%%%%%%%%%%%%%%%%%%
%%%%%%%%%%%%%%%%%%%%%%%%%%%%%%%%%%%%%%%%%%%%%%%%%%%%%%%%%%%%%%%%%%%%%%%%%%%%%%%%%%%%%%%%%%%%%%%

\subsection{Stationary-Positional-NCRSP}

In this section, the complexity of NCRSP where only the system can take a stationary strategy and 
the environment players must take positional strategies is analyzed. 
This subproblem of NCRSP is called Stationary-Positional-NCRSP\@. 
%Stationary-Positional-NCRSP is not more difficult than Stationary-NCRSP and 
We show Stationary-Positional-NCRSP is in NEXPSPACE. 

%%%%%%%%%%%%%%%%%%%%%%%%%%%%%%%%%%%%%%%%%%%%%%%%%%%%%%%%%%%%%%%%%%%%%%%%%%%%%%%%%%%%%%%%%%%%%%%
%%%%%%%%%% Examples %%%%%%%%%%%%%%%%%%%%%%%%%%%%%%%%%%%%%%%%%%%%%%%%%%%%%%%%%%%%%%%%%%%%%%%%%%%
%%%%%%%%%%%%%%%%%%%%%%%%%%%%%%%%%%%%%%%%%%%%%%%%%%%%%%%%%%%%%%%%%%%%%%%%%%%%%%%%%%%%%%%%%%%%%%%

Stationary-Positional-NCRSP is a distinct problem from Pure-NCRSP and Stationary-NCRSP. 
First, we show an example for which the answer of Stationary-Positional-NCRSP is yes 
but the answer of Pure-NCRSP is no. 
\begin{figure}[tb]
\begin{center}
\includegraphics[width=8cm]{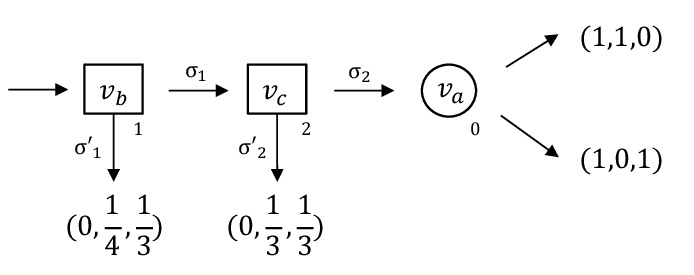}
\end{center}
\caption{An SMG for which the answer of Stationary-Positional-NCRSP is yes and that of Pure-NCRSP is no with $\mu =1$}
\label{fig:ex1}
\end{figure}
\begin{example}
Figure \ref{fig:ex1} shows an SMG played by three players $0,1,2$ who have TR objectives. 
Let $\mu =1$. 
Let $\sigma_{1}$ and $\sigma_{1}'$ be the strategies of player 1 that transits 
from $v_{b}$ to $v_{c}$ and from $v_{b}$ to $(0,\frac{1}{4},\frac{1}{3})$, respectively.  
Similarly, let $\sigma_{2}$ and $\sigma_{2}'$ be the strategies of player 2 that transits 
from $v_{c}$ to $v_{a}$ and from $v_{c}$ to $(0,\frac{1}{3},\frac{1}{3})$, respectively.   

For the SMG, the answer of Stationary-Positional-NCRSP is yes. 
Indeed, consider the strategy $\sigma_{0}$ for player $0$ that transits from $v_a$ to $(1,1,0)$ 
with probability $\frac{1}{2}$ and from $v_a$ to $(1,0,1)$ with probability $\frac{1}{2}$. 
Then, for each strategy profile $\barsigma$, $Pay_{0}(\barsigma)\ge 1 \lor \neg 0NE(\barsigma)$ holds. 
Table \ref{tb:ex1} shows each strategy profile $\barsigma$ and 
the payoff $Pay(\barsigma)$ taken by the strategy profile with $\sigma_{0}$ fixed.  

\begin{table}[tb] \label{tb:ex1}
\begin{center}
\caption{All strategy profiles $Pay(\barsigma)$ with $\sigma_{0}$ fixed. }
\begin{tabular}{|c||c|c|}
\hline
strategies for player 1$\backslash$for player 2 &  $\sigma_{2}$ &  $\sigma_{2}'$ \\ \hline \hline
$\sigma_{1}$                 & $(1,\frac{1}{2},\frac{1}{2})$ & $(0,\frac{1}{3},\frac{1}{3})$\\ \hline
$\sigma_{1}'$                & $(0,\frac{1}{4},\frac{1}{3})$ & $(0,\frac{1}{4},\frac{1}{3})$ \\ \hline
\end{tabular}
\end{center}
\end{table}

The strategy profile $\barsigma$ that satisfies $0NE(\barsigma)$ is 
only $\barsigma=(\sigma_{0}, \sigma_{1} ,\sigma_{2})$. 
Therefore, the answer of Stationary-Positional-NCRSP for the SMG with $\mu =1$ is yes. 

Next, we show that the answer of Pure-NCRSP for the SMG with $\mu =1$ is no. 
Because of the restriction on pure strategies, 
player $0$ must deterministically transit from $v_A$ to one of $(1,1,0)$ and $(1,0,1)$. 
If player $0$ chooses $(1,1,0)$, the strategy profile $\barsigma=(\sigma_{0}, \sigma_{1} ,\sigma_{2}')$ 
satisfies $0NE(\barsigma)$ and player $0$ loses in $\barsigma$. 
Similarly, if player $0$ chooses $(1,0,1)$, 
the strategy profile $\barsigma'=(\sigma_{0}, \sigma_{1}' ,\sigma_{2})$ 
satisfies $0NE(\barsigma)$ and player $0$ loses in $\barsigma'$. 
Hence,  the answer of Pure-NCRSP for the SMG with $\mu =1$ is no. 
\qed
\end{example}

Next, an example for which the answer of Pure-NCRSP is yes 
but the answer of Stationary-Positional-NCRSP is no is shown. 

\begin{figure}[tb]
\begin{center}
\includegraphics[width=6cm]{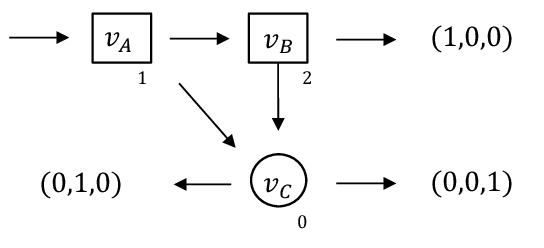}
\end{center}
\caption{An SMG for which the answer of Pure-NCRSP is yes 
and that of Stationary-Positional-NCRSP is no with $\mu =1$}
\label{fig:ex2}
\end{figure}

\begin{example}
The SMG in Figure \ref{fig:ex2}
was first given in \cite[Figure 4]{UW11} as an example for which the answer of Pure-NCRSP is yes 
but the answer of Stationary-NCRSP is no. 
It is easily proved that the answer of Stationary-Positional-NCRSP for the SMG is no. 
\qed
\end{example}

Next, we show that Stationary-Positional-NCRSP and Stationary-NCRSP are distinct. 
For the following SMG, the answer of Stationary-NCRSP is yes 
but the answer of Stationary-Positional-NCRSP is no. 

\begin{figure}[tb]
\begin{center}
\includegraphics[width=5cm]{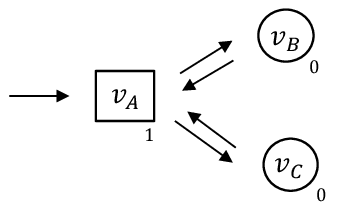}
\end{center}
\caption{An SMG for which the answer of Stationary-NCRSP is yes and 
that of Stationary-Positional-NCRSP is no with $\mu =1$}
\label{fig:ex3}
\end{figure}

\begin{example}
Figure \ref{fig:ex3} illustrates an SMG played by two players $0,1$. 
We assume the winning objectives for players $0,1$ are both 
$\Muller(\phi)$ with $\phi = v_{a} \wedge v_{b} \wedge v_{c}$. 
Let $\mu =1$. 
First, we show that the answer of Stationary-NCRSP for the SMG with $\mu =1$ is yes. 
Let $\sigma_{1}$, $\sigma_{1}'$ and $\sigma_{1}''$ be the strategies of player 1 
that transits from $v_{a}$ to $v_{b}$ deterministically, from $v_{a}$ to $v_{c}$ deterministically 
and branching from $v_{a}$ to $v_{b}$ and $v_{c}$ with arbitrary positive probability, respectively. 
If player $1$ follows strategy $\sigma_{1}$ or $\sigma_{1}'$, the payoffs of player $0$ and $1$ are 0. 
If player $1$ follows strategy $\sigma_{1}''$, the payoffs of players $0$ and $1$ are 1. 
Because $\sigma_{1}''$ is the only strategy profile that is a 0-fixed Nash equilibrium, 
the answer of Stationary-NCRSP for the SMG with $\mu =1$ is yes. 

Next, we show that the answer of Stationary-Positional-NCRSP for the SMG with $\mu =1$ is no.    
Because the environment player $1$ must follow a positional strategy, 
player $1$ must transit deterministically from $v_{a}$ to $v_{b}$ or $v_{c}$. 
Each of the two strategy profiles is a 0-fixed Nash equilibrium and player $0$ gets payoff $0$. 
Therefore, the answer of Stationary-Positional-NCRSP for the SMG with $\mu =1$ is no. 
\qed
%確率を使う事で勝ちの均衡を作れる，負けの均衡が消える -> TRとかでもこの現象は起こるか？
%予想だが，3Buchiの能力がなければこのようなことは起こらない．
\end{example}

Finally, we show an example for which the answer of Stationary-Positional-NCRSP is yes 
but the answer of Stationary-NCRSP is no. 

\begin{figure}[tb]
\begin{center}
\includegraphics[width=8cm]{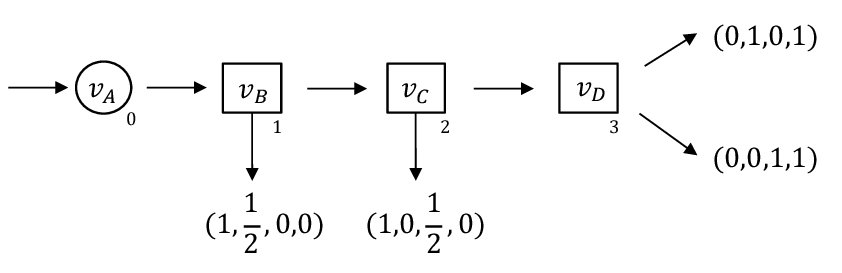}
\end{center}
\caption{An SMG for which the answer of Stationary-Positional-NCRSP is yes 
and that of Stationary-NCRSP is no with $\mu =1$}
\label{fig:ex4}
\end{figure}
\begin{example}
Figure \ref{fig:ex4} shows an SMG played by four players $0,1,2,3$ and each player has a TR objective. 
Let $\mu =1$. 
We show that the answer of Stationary-Positional-NCRSP for the SMG with $\mu =1$ is yes. 
Because every environment player must follow a positional strategy,  
player $3$ must transit deterministically from $v_{D}$ to $(0,1,0,1)$ or $(0,0,1,1)$. 
If player $3$ chooses $(0,1,0,1)$, each strategy profile that reaches $(0,1,0,1)$ 
cannot be a 0-fixed Nash equilibrium because player $2$ at $v_C$ has a profitable deviation. 
Similarly, if player $3$ chooses $(0,0,1,1)$, each strategy profile that reaches $(0,0,1,1)$ 
cannot be a 0-fixed Nash equilibrium because player $1$ at $v_B$ has a profitable deviation. 
This implies that every 0-fixed Nash equilibrium must 
reach $(1,\frac{1}{2},0,0)$ or $(1,0,\frac{1}{2},0)$ and hence 
the answer of Stationary-Positional-NCRSP for the SMG with $\mu =1$ is yes. 

Next we show that the answer of Stationary-NCRSP for the SMG with $\mu =1$ is no. 
Consider strategy $\sigma_{3}$ for player $3$ that branches at $v_{D}$ to $(0,1,0,1)$ and $(0,0,1,1)$, 
each with probability $\frac{1}{2}$. 
Then, consider the strategy profile that reaches $v_{D}$ with probability $1$ 
and player $3$ follows $\sigma_{3}$. 
The strategy profile is a 0-fixed Nash equilibrium where player $0$ loses. 
Hence, the answer of Stationary-NCRSP for the SMG with $\mu =1$ is no. 
\qed
\end{example}

%%%%%%%%%%%%%%%%%%%%%%%%%%%%%%%%%%%%%%%%%%%%%%%%%%%%%%%%%%%%%%%%%%%%%%%%%%%%%%%%%%%%%%%%%%%%%%%
%%%%%%%%%% End of Examples %%%%%%%%%%%%%%%%%%%%%%%%%%%%%%%%%%%%%%%%%%%%%%%%%%%%%%%%%%%%%%%%%%%%
%%%%%%%%%%%%%%%%%%%%%%%%%%%%%%%%%%%%%%%%%%%%%%%%%%%%%%%%%%%%%%%%%%%%%%%%%%%%%%%%%%%%%%%%%%%%%%%

In the following, 
we show that Stationary-Positional-Muller-NCRSP is in EXPSPACE (Theorem \ref{SPMN})
with the help of Proposition \ref{EFL}. 

\begin{proposition}[\hspace{1sp}\cite{JC88}]\label{EFL}
The decision problem of the validity of a first-order sentence that has no universal quantifier 
in the theory of real numbers with order ${\cal{R}}:=(R,+,\times, 0, 1,\leq)$ is in PSPACE.  
\end{proposition}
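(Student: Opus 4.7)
The plan is to follow Canny's critical-point approach to show that the existential theory of $\cal{R}$ is decidable in polynomial space. The first step is to put the input sentence into prenex form $\exists x_1\cdots\exists x_n.\,\phi(x_1,\ldots,x_n)$, where $\phi$ is a quantifier-free Boolean combination of atoms; since the signature provides only $+$, $\times$, $0$, $1$, and $\leq$, every atom can be rewritten as a sign condition $p(x)\triangleright 0$ with $\triangleright\in\{<,=,>\}$ on an integer-coefficient polynomial. Let $P$ denote the resulting finite family of polynomials. The cardinality $|P|$, the maximum degree $d$, and the coefficient bit-lengths are all polynomial in the input size.

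Second, I would invoke a sample-point theorem from effective real algebraic geometry: if the realisation in $R^n$ of any Boolean combination of sign conditions on $P$ is non-empty, then it already contains a point whose coordinates are real algebraic numbers of total degree $d^{O(n)}$ and whose defining univariate polynomials have integer coefficients of bit-length polynomial in the input size. Such witnesses are produced by the critical-point method: one introduces an infinitesimal deformation to make the involved varieties smooth and bounded, writes down a zero-dimensional algebraic system by adjoining gradient equations with respect to a generic linear form, and shows that every connected component of every sign cell contains at least one of the finitely many resulting critical points.

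Third, I would describe a polynomial-space enumeration. Each candidate sample point is encoded by a univariate (or Thom) representation: a univariate polynomial of polynomially bounded bit-length together with a sign vector isolating one of its real roots, from which all $n$ coordinates are recovered by polynomial evaluation. A single candidate therefore fits in polynomial space, even though the total number of candidates can be singly exponential. For each candidate, deciding whether $\phi$ holds reduces to evaluating the sign of every polynomial in $P$ at an algebraic point; this is done in polynomial time by counting real roots of associated univariate polynomials in prescribed intervals via Sturm sequences. Cycling through candidates in place and accepting when one satisfies $\phi$ yields an NPSPACE procedure, and Savitch's theorem gives the claimed PSPACE bound.

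The main obstacle is the sample-point theorem together with its effective bit-complexity estimates: simultaneously bounding the algebraic degree by $d^{O(n)}$, keeping the minimal polynomials of polynomial bit-length, and ensuring that sign evaluation at these algebraic points is in polynomial space requires careful management of the infinitesimal deformation, of the choice of generic linear form, and of the univariate representation. This is precisely the technical heart of Canny's argument; once these ingredients are in place, the PSPACE enumeration and the Sturm-based sign evaluations are routine.
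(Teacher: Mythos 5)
The paper offers no proof of this proposition; it is quoted as a known result of Canny~\cite{JC88}, so your sketch can only be judged against the literature it is reconstructing. As an outline of Canny's strategy (reduce to sign conditions, find algebraic sample points by the critical-point method, enumerate candidates in place, decide signs by Sturm sequences, finish with Savitch) it is faithful, but it contains one claim that is actually false and that hides the real difficulty of the proof.

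The gap is in your third step: you assert that each candidate sample point is given by ``a univariate polynomial of polynomially bounded bit-length,'' so that a single candidate fits in polynomial space. This is not true. The algebraic degree of the sample points is $d^{O(n)}$, and the coefficient bit-lengths of their defining univariate polynomials are likewise singly exponential in $n$; consider the system $x_1=2,\ x_2=x_1^2,\ \ldots,\ x_n=x_{n-1}^2$, whose unique solution has $x_n=2^{2^{n-1}}$ and therefore cannot be written down explicitly in polynomial space at all. Consequently the ``routine'' enumeration you describe does not exist as stated: neither the candidates nor the Sturm sequences of the associated exponential-degree univariate polynomials can be stored explicitly. Canny's contribution is precisely to handle these objects \emph{implicitly}: the $u$-resultant coefficients and the Sturm-sequence sign data are obtained from determinants of exponentially large but succinctly described matrices, which are evaluated in polynomial space via small-depth (NC) linear-algebra circuits together with the parallel computation thesis (alternatively, via modular arithmetic and Chinese remaindering). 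Without replacing your explicit-representation step by this implicit-evaluation machinery, the argument establishes only an EXPSPACE bound, not PSPACE.
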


\begin{theorem}\label{SPMN}
Stationary-Positional-Muller-NCRSP is in NEXPSPACE. 
\end{theorem}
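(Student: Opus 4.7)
The plan is to adapt the first-order-logic approach from Theorem~\ref{SMN}, exploiting two features of the positional restriction to produce a purely existential sentence over the reals of exponential size; Proposition~\ref{EFL} then bounds its decision in PSPACE in the sentence length, hence EXPSPACE in $|\calG|$, and a nondeterministic choice of the system's support lifts this to NEXPSPACE. The first feature is that a positional environment profile $\barsigma_{-0}$ is completely determined by a support $S_{-0}$ that picks exactly one successor per environment vertex, so the ``$\forall\barsigma_{-0}$'' in the definition of NCRSP unfolds into a finite conjunction over such $S_{-0}$ rather than a universal real quantifier. The second is that an environment player's optimal deviation payoff $r^{i}_{v_{0}}$ equals the maximum reachability probability to $\bigcup E^{S}_{i}$ in the MDP induced by $\barsigma_{-i}$, and max-reachability in an MDP is attained by a memoryless deterministic policy; the Nash-condition $z^{i}_{v_{0}}\ge r^{i}_{v_{0}}$ is therefore expressible as the conjunction $\bigwedge_{\tau_{i}} z^{i}_{v_{0}}\ge p^{\tau_{i}}_{v_{0}}$ over the exponentially many memoryless deterministic policies $\tau_{i}$, where $p^{\tau_{i}}_{v_{0}}$ is the reachability probability under $\tau_{i}$. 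This replaces the universal over reals present in $\psi^{r}$ of Theorem~\ref{SMN} with a finite conjunction.

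Concretely, I would first nondeterministically guess $S_{0}\in\SUPP^{\calA}_{0}$ in polynomially many bits. For each positional environment support $S_{-0}$ I precompute $B^{S}_{i}, R^{S}_{i}, E^{S}_{i}$ with $S=S_{0}\cup S_{-0}$ as in Theorem~\ref{SMN} and introduce fresh variables $\barz^{S_{-0}}$ governed by the analogue of $\psi^{z}_{i,S}$, with the $\alpha^{-0}$ entries replaced by the $0/1$ constants coming from $S_{-0}$. For each $i\in\OmegaX{0}$ and each memoryless deterministic policy $\tau_{i}$ I introduce fresh variables $p^{S_{-0},\tau_{i}}_{v}$ constrained by the purely linear equations that force them to equal the probability of reaching $\bigcup E^{S}_{i}$ from $v$ in the Markov chain obtained by combining the stationary $\alpha^{0}$, the positional $S_{-0}$, and the deterministic $\tau_{i}$.

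Assembling these ingredients, I would build
\begin{align*}
\psi = {}& \exists\,\alpha^{0},\, \barz^{(\cdot)},\, p^{(\cdot,\cdot)}.\ \psisys_{S_{0}}(\alpha^{0})\ \land{} \\
& \bigwedge_{S_{-0}} \Bigl((\text{eqs.\ for }\barz^{S_{-0}} \text{ and } p^{S_{-0},\cdot})\ \Rightarrow{} \\
& \bigl(\textstyle\bigwedge_{i\in\OmegaX{0},\tau_{i}} z^{i}_{v_{0}}\ge p^{S_{-0},\tau_{i}}_{v_{0}} \Rightarrow z^{0}_{v_{0}}\ge\mu\bigr)\Bigr),
\end{align*}
which contains no universal real quantifier and is exponential in $|\calG|$ because both the number of positional environment supports and the number of memoryless deterministic policies per MDP are exponential. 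By Proposition~\ref{EFL}, the validity of $\psi$ is decidable in PSPACE in $|\psi|$, hence EXPSPACE in $|\calG|$; together with the nondeterministic guess of $S_{0}$ the overall procedure lies in NEXPSPACE. The main obstacle I anticipate is justifying that reducing the player-$i$ Muller MDP to max-reachability of $\bigcup E^{S}_{i}$ is sound in the present asymmetric stationary/positional setting, so that enumerating memoryless deterministic policies really captures the optimal deviation payoff; this rests on standard MDP theory but must be stated carefully because the $r^{i}$ of Theorem~\ref{SMN} was instead defined implicitly via the universal-over-supersolutions trick, which we can no longer afford here.
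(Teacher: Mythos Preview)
There are two genuine gaps. First, your $\psi$ places the defining equations for $\barz^{S_{-0}}$ and $p^{S_{-0},\cdot}$ in the antecedent of an implication while those variables are \emph{existentially} quantified. For any $\alpha^{0}$ there are assignments to $\barz,p$ violating the equations, so each inner implication is vacuously satisfiable and $\psi$ degenerates to $\exists\alpha^{0}.\,\psisys_{S_{0}}(\alpha^{0})$, which always holds. The analogous shape works in Theorem~\ref{SMN} only because there $\barz,\barr$ sit under a $\forall$. The repair is to \emph{conjoin} the equation blocks; since they have a unique solution once $\alpha^{0}$ is fixed, the existential witnesses are then forced to the intended values.

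Second, the obstacle you flag is real and your definition of $p^{S_{-0},\tau_{i}}$ is wrong. For a \emph{fixed} positional $\tau_{i}$, the deviation payoff $\Pay_{i}(\barsigma_{-i},\tau_{i})$ is not the probability of reaching $\bigcup E^{S}_{i}$; it is the probability of reaching a BSCC, satisfying $\phi_{i}$, of the Markov chain induced by $\alpha^{0}$, the positional choices of the other environment players, and $\tau_{i}$ (one can enter $E^{S}_{i}$ and still lose under $\tau_{i}$). You are not after the \emph{optimal} deviation value---which is max-reach of $E^{S}_{i}$ and is attained by a stationary but in general not positional strategy---but after $\Pay_{i}(\barsigma_{-i},\tau_{i})$ for each positional $\tau_{i}$, because deviations here are positional. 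Those BSCCs are combinatorially determined by $S_{0}$ together with $S_{-0}$ and $\tau_{i}$, so the corrected equations stay linear in $p$ with coefficients in $\alpha^{0}$, and your route is salvageable. For comparison, the paper proceeds quite differently: it writes each $\Pay_{j}(\barsigma_{-0})$ directly as a polynomial expression in $\alpha^{0}$ (no auxiliary $\barz,p$ at all) and then makes a \emph{second} nondeterministic guess---a subset $\Deltapd'\subseteq\Deltapd$ of witnessing profitable deviations---so that the final sentence is the pure conjunction $\exists\alpha^{0}.\,\psisys_{S_{0}}(\alpha^{0})\land\bigwedge_{C\in\COND}C$ with no Boolean implications at all.
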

\begin{proof}
We show a nondeterministic exponential space algorithm that solves the problem. 
First, choose a support $S_0\in \SUPP^{\cal{A}}_0$ for the system non-deterministically. 
Next, for each tuple of positional strategies $\barsigma_{-0}=(\sigma_1,\sigma_2,\ldots ,\sigma_k)$ 
of the environment players, 
construct the new game arena ${\calA}^{S_0,\barsigma_{-0}}=\langle \Omega, V, 
(V_i)_{i\in\Omega\cup\{\Nature\}},  
\{(v,\bot,w) \in \Delta \mid (v,w)\in S_0 \} \cup
\{(v,\bot,\sigma_i(v)) \mid v\in V_i,~ i\in\OmegaX{0} \} \cup
\{(v,p,w)\in \Delta \mid v\in V_{\Nature}, p\in (0,1] \, \},  
v_0\rangle $. 
${\calA}^{S_0,\barsigma_{-0}}$ is the $(k+1)$ player game obtained from ${\cal A}$ by 
pruning the edges not chosen by $S_0$ or $\barsigma_{-0}$.  
Remember $\alpha^{0}=(\alpha_{vw})_{v\in V_{0},w\in V}$ used in the proof of Theorem \ref{SMN}. 
By regarding variables $\alpha^{0}$ as probabilities, 
for each ${\cal{A}}^{S_0,\barsigma_{-0}}$, we can compute $Pay(\barsigma_{-0})$ 
in PTIME by the method in the proof of \cite[Theorem 5]{KTS24}. 
Because the number of strategies for the environment players is exponential, 
computing and writing $Pay(\barsigma_{-0})$ for all ${\cal{A}}^{S_0,\barsigma_{-0}}$ 
need exponential space. 
Note that $Pay(\barsigma_{-0})$ is an expression consisting of 
+, $\times$, $\alpha^{0}$ and rational numbers $p$ such that $(v,p,w)\in\Delta$ for some $v$ and $w$.

Next, let $\Sigma_{-0}$ be the set of
the strategy profiles of the environment players
and
$\Deltapd = \{
  (\barsigma_{-0},i,\barsigma'_{-0})
  \in\Sigma_{-0}\times\Omega_{-0}\times\Sigma_{-0} \mid
  \sigma_j=\sigma'_j$ for $j\in\Omega_{-0}\setminus\{i\}$
  and $\sigma_i\ne\sigma'_i\}$.
Each member
$(\barsigma_{-0},i,\barsigma'_{-0})$ of $\Deltapd$
represents a candidate of profitable deviation of player~$i$
from $\barsigma_{-0}$ to~$\barsigma'_{-0}$.
Then, non-deterministically choose a subset
$\Deltapd' \subseteq \Deltapd$,
and construct
a set $\COND$ of first-order formulas as
$\COND =
  \{ \Pay_i(\barsigma_{-0}) < \Pay_i(\barsigma'_{-0}) \mid
     (\barsigma_{-0},i,\barsigma'_{-0})\in\Deltapd' \} \cup
  \{ \Pay_0(\barsigma_{-0}) \ge \mu \mid
     (\barsigma_{-0},i,s)\notin\Deltapd'$ for any $i$ and $s\}$.
$\COND$ gives the constraints that make
the elements of $\Deltapd'$ actual profitable deviations;
for each
$(\barsigma_{-0},i,\barsigma'_{-0})\in\Deltapd'$,
the payoff of $i$ obtained from $\barsigma'_{-0}$ is
greater than that from~$\barsigma_{-0}$.
Moreover, $\barsigma_{-0}$ such that
$(\barsigma_{-0},i,s)\notin\Deltapd'$ for any $i$ and $s$
is a 0-fixed Nash equilibrium and thus
it should satisfy $\Pay_0(\barsigma_{-0}) \ge \mu$.
Finally, by using $\COND$ and variables $\alpha^0$,
construct first-order sentence $\psi$
without universal quantifier as
$\psi = \exists\alpha^0.\ \psisys_{S_0}(\alpha^0)
 \land \bigwedge_{C\in\COND} C$.
By Proposition \ref{EFL}, we can decide whether $\psi$ is valid
in PSPACE\@.
Therefore, the algorithm works in NEXP\-SPACE\@. 
\end{proof}

\section{NCRSP in Other Settings}
\label{sec:others}

%%%%%%%%%%%%%%%%%%%%%%%%%%%%%%%%%%%%%%%%%%%%%%%%%%%%%%%%%%%%%%%%%%%%%%%%%%%%%%%%%%%%%%%%%%%%%%%
%%%%%%%%%% NCRSP_{>} %%%%%%%%%%%%%%%%%%%%%%%%%%%%%%%%%%%%%%%%%%%%%%%%%%%%%%%%%%%%%%%%%%%%%%%%%%
%%%%%%%%%%%%%%%%%%%%%%%%%%%%%%%%%%%%%%%%%%%%%%%%%%%%%%%%%%%%%%%%%%%%%%%%%%%%%%%%%%%%%%%%%%%%%%%

\subsection{NCRSP$_{>}$}
In \cite[Theorems 4.9, 4.14]{UW11}, it is shown that TR-CRSP for 
pure finite-state, finite-state, pure and unrestricted strategies are all undecidable. 
In this section, the undecidability of TR-NCRSP$_{>}$ for 
these four subclasses of strategies are shown by reducing TR-CRSP to TR-NCRSP$_{>}$
where NCRSP$_{>}$ is defined as follows: 
Given an SMG $\cal{G}$ and a rational number $\mu \in [0,1]$, decide whether 
%
%\begin{quote}
$\exists \sigma_0.\, \forall \barsigma_{-0}.\,
(\ZNE(\barsigma) \Rightarrow \Pay_0(\barsigma)> \mu)$ %holds\\
where $\barsigma=(\sigma_i)_{i\in\Omega}$. 
%\end{quote}
%
The difference between NCRSP and NCRSP$_{>}$ is the conditions the system should satisfy: 
$Pay_0(\barsigma)\ge \mu$ in NCRSP while $Pay_0(\barsigma)> \mu$ in NCRSP$_{>}$. 
The complement problem of NCRSP$_{>}$ is the decision problem $\overline{{\rm NCRSP_{>}}}$: 
Given an SMG $\cal{G}$ and a rational number $\mu \in [0,1]$, decide whether
%
%\begin{quote}
$\forall \sigma_0.\, \exists \barsigma_{-0}.\,
(\ZNE(\barsigma) \land \Pay_0(\barsigma)\leq\mu)$ %holds\\
where
$\barsigma=(\sigma_i)_{i\in\Omega}$.  
%\end{quote}

\begin{lemma}
Suppose we are given objective classes $\textbf{F}$, $\textbf{F}'$ and a strategy class $S$. 
If $\textbf{F}'\supseteq \textbf{F}\cup co\mathchar`-\textbf{F}$, 
the problem $S$-$\textbf{F}$-CRSP with $k+1$ players can be reduced to 
S-$\textbf{F}'$-$\overline{\text{NCRSP}_{>}}$ with $k+2$ players in polynomial time. 
\end{lemma}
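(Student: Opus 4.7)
The plan is to reduce $S$-$\textbf{F}$-CRSP on a $(k{+}1)$-player game $\calG$ with threshold $\mu$ to $S$-$\textbf{F}'$-$\overline{\text{NCRSP}_{>}}$ on a $(k{+}2)$-player game $\calG'$ with threshold $\mu'=1-\mu$. The core idea is to prepend a fresh initial vertex $v_0'$ with a unique outgoing edge to the old initial vertex $v_0$, to put $v_0'$ under the control of a brand-new system player (the new player~$0$), and to give this new player the complement of the original player~$0$'s objective. The forced move at $v_0'$ makes the universal quantifier $\forall\sigma_0$ in $\overline{\text{NCRSP}_{>}}$ degenerate to a single case, while the complemented objective flips the condition $\Pay_0'(\barsigma')\le\mu'$ back into the CRSP condition $\Pay_0(\barsigma)\ge\mu$.

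Concretely, starting from $\calA=\langle\Omega,V,(V_i)_{i\in\Omega\cup\{\Nature\}},\Delta,v_0\rangle$, I would construct $\calA'=\langle\Omega',V',(V_i')_{i\in\Omega'\cup\{\Nature\}},\Delta',v_0'\rangle$ with $\Omega'=\{0,\dots,k+1\}$, $V'=V\cup\{v_0'\}$, $V_0'=\{v_0'\}$, $V_{i+1}'=V_i$ for $0\le i\le k$, $V_\Nature'=V_\Nature$, and $\Delta'=\Delta\cup\{(v_0',\bot,v_0)\}$. The objectives are $O_{i+1}'=O_i$ for $0\le i\le k$ (the same syntactic description is reused, which is sound because $v_0'$ is fresh) and $O_0'=(V')^\omega\setminus O_1'$. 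The hypothesis $\textbf{F}'\supseteq \textbf{F}\cup co\mathchar`-\textbf{F}$ ensures all objectives lie in $\textbf{F}'$, and for the concrete classes $\OC{TR}$ and $\OC{Muller}$ the required complementation is a polynomial-time syntactic operation (flipping membership in~$F$, or negating the Boolean formula~$\phi$), so the whole construction is polynomial in the size of~$\calG$.

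Next I would establish three correspondences. (i)~Since $v_0'$ has a unique successor, every strategy $\sigma_0$ in class $S$ must assign probability~$1$ to $(v_0',v_0)$ at every history, so the outer $\forall\sigma_0$ collapses to a single strategy. (ii)~Strategy profiles $\barsigma_{-0}'$ in $\calG'$ are in natural bijection with strategy profiles $\barsigma$ in $\calG$, and every play of $\calG'$ is a play of $\calG$ prefixed by the single occurrence of $v_0'$; hence $\Pay_{i+1}'(\barsigma')=\Pay_i(\barsigma)$ for all $0\le i\le k$ and $\Pay_0'(\barsigma')=1-\Pay_1'(\barsigma')=1-\Pay_0(\barsigma)$. (iii)~Since new player~$0$ has no alternative strategy to switch to, $\ZNE(\barsigma')$ in $\calG'$ reduces to the no-deviation conditions over new players $1,\dots,k+1$, which through the re-indexing coincide with $\NE(\barsigma)$ in $\calG$. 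Combining these, $\exists\barsigma_{-0}'.\ \ZNE(\barsigma')\land\Pay_0'(\barsigma')\le\mu'$ translates to $\exists\barsigma.\ \NE(\barsigma)\land\Pay_0(\barsigma)\ge 1-\mu'=\mu$, which is exactly CRSP for~$(\calG,\mu)$.

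The main delicate point is to verify that the prefix vertex $v_0'$ really interacts neutrally with every objective class under consideration: for $\OC{TR}$ this is automatic because $v_0'$ is not a terminal vertex, while for $\OC{Muller}$ we use that $v_0'$ appears exactly once in any play and so can never lie in $\Inf(\pi)$. Together with the hypothesis $\textbf{F}'\supseteq\textbf{F}\cup co\mathchar`-\textbf{F}$, which places $O_0'$ and each $O_{i+1}'$ in the target class, this makes the entire reduction go through uniformly over the six strategy classes listed in Sect.~\ref{sec:strategies}.
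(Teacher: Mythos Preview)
Your reduction is correct, but it is more elaborate than the paper's. The paper does not introduce a new vertex at all: it simply adjoins a fresh system player $0'$ with $V_{0'}=\emptyset$ (and $O_{0'}=\overline{O_0}$), keeping the arena, the original players, and the initial vertex unchanged. With an empty vertex set the new player has exactly one (empty) strategy, so the outer $\forall\sigma_{0'}$ collapses immediately, and since the original player~$0$ is now an environment player, $0'$-fixed Nash equilibrium in $\calG'$ literally coincides with Nash equilibrium in~$\calG$.

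What the paper's route buys is that two of your ``delicate points'' vanish: there is no prefix vertex $v_0'$, so you need not argue that each objective class is insensitive to a one-step prefix, and there is no history shift, so you need not argue that the bijection $\sigma_i\leftrightarrow\sigma'_{i+1}$ preserves the strategy class~$S$ (which in your version requires, e.g., tweaking the initial state of a Mealy machine in the finite-state case). Your construction achieves the same effect via a forced move, which is perfectly sound for the classes actually used later ($\OC{TR}$ and $\OC{Muller}$, and the six strategy classes), but carries a small verification overhead that the paper's empty-vertex-set trick sidesteps entirely.
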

\begin{proof}
CRSP is the decision problem to decide whether 
$\exists \barsigma=(\sigma_i)_{i\in\Omega}.\ ( \NE(\barsigma) \wedge \Pay_0(\barsigma)\ge\mu )$ holds 
given a SMG ${\cal{G}} = \langle  {\cal{A}}, {{(O_{i})}_{i\in\Omega}} \rangle$ where  
${\cal A}= \langle \Omega, V, (V_i)_{i\in\Omega\cup\{\Nature\}},  \Delta, v_0\rangle$ 
and a threshold $\mu\in [0,1]$. 
From ${\calG}$, we construct
${\calG}' = \langle  {\calA}', {{(O_{i})}_{i\in\Omega '}} \rangle$ where 
${\cal A}' = \langle \Omega ', V, (V_i)_{i\in\Omega'\cup\{\Nature\}},\allowbreak
\Delta, v_0\rangle$, 
$\Omega '=\Omega\cup \{0'\}$, player $0'$ is the system of ${\calG}'$, 
$V_{0'}=\emptyset$, and $O_{0'}=\overline{O_0}$. 
Note that we use the complement of the objective for the system in ${\calG}$. 
We reduce the given CRSP instance 
to $\overline{{\rm NCRSP_{>}}}$ for ${\calG}'$ and $1-\mu$, 
which is the problem to decide whether
$\forall \sigma_{0'}.\, \exists (\sigma_i)_{i\in\Omega}.\,
   (\ZpNE(\barsigma') \land \Pay_{0'}(\barsigma')\leq1-\mu)$ 
   where $\barsigma'=(\sigma_i)_{i\in\Omega'}$. 
Because player $0'$ controls no vertex, 
$\forall \sigma_{0'}.\, \exists (\sigma_i)_{i\in\Omega}$ 
is equivalent to $\exists (\sigma_i)_{i\in\Omega}$ 
and $\ZpNE(\barsigma')$ coincides with $\NE(\barsigma)$. 
The reduction is correct 
because $\Pay_{0'}(\barsigma')+\Pay_{0}(\barsigma)=1$, 
which is implied by $O_{0'}=\overline{O_0}$.  
\end{proof}
Though $\textbf{TR}$ is not closed under the complement in general, for each SMG ${\cal{G}}$ 
such that each play of ${\cal{G}}$ almost surely reaches a terminal vertex, 
the complement of any TR objective on ${\cal{G}}$ is also a TR objective. 
\begin{lemma}
Let ${\cal{G}}$ be an SMG. 
If the arena of ${\cal{G}}$ satisfies the following condition, 
the complement of any TR objective on ${\cal{G}}$ 
can be represented by a TR objective on this ${\cal{G}}$: 
\begin{align*}
\forall \barsigma.\ Pr^{\barsigma}(\{\pi\in V^{\omega}\mid{}&\exists t\in V,\ \Next(t)=\{t\},\\
&\exists n \in \mathbb{N}.\ \pi (n)=t\})=1.
%\qed
\end{align*}
\end{lemma}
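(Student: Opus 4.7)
The plan is to take as the representative TR objective the one given by the set $F' := T \setminus F$, where $T = \{\, t\in V \mid \Next(t)=\{t\} \,\}$ is the set of all terminal vertices of the arena of $\calG$. I would then argue that for every strategy profile $\barsigma$ the equality $\PR^{\barsigma}(\,\overline{\TR(F)}\,) = \PR^{\barsigma}(\TR(F'))$ holds, which is exactly the sense in which a TR objective can ``represent'' the complement in the NCRSP framework, where objectives enter only via the induced payoffs $\Pay_i(\barsigma)=\PR^{\barsigma}(O_i)$.

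The key observation is that any play $\pi$ that reaches a terminal vertex $t$ reaches it uniquely: since $\Next(t)=\{t\}$, once $\pi$ visits $t$ it remains at $t$ forever, so it cannot subsequently visit any other terminal. Consequently, $\TR(F)$ and $\TR(F')$ are disjoint, and their union coincides exactly with the set of plays that ever reach some terminal vertex. By the hypothesis of the lemma, this union has probability $1$ under every $\barsigma$. Therefore $\TR(F')$ differs from $\overline{\TR(F)}$ only by a set of probability zero, which yields the desired equality of probabilities under every strategy profile.

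I do not foresee any real technical obstacle. The only subtlety worth flagging explicitly is that the ``representation'' asserted by the lemma is \emph{not} literal set equality: a play that loops forever without visiting any terminal belongs to $\overline{\TR(F)}$ but not to $\TR(F')$, so the symmetric difference can be nonempty at the level of sets. However, the hypothesis on $\calG$ ensures this difference is a null set under every $\barsigma$, and since payoffs, Nash equilibria, $\ZNE$, and the definition of NCRSP (and $\text{NCRSP}_{>}$) all depend on objectives only through their probabilities, almost-sure equivalence is precisely the notion of representation needed for the reduction in the preceding lemma to go through.
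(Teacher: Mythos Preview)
Your proposal is correct and is exactly the argument the paper has in mind: the paper itself does not supply a proof for this lemma, treating it as an immediate observation (it states the lemma and then moves directly to ``By the two lemmas above\ldots''). Your choice $F' = T\setminus F$ and the justification via disjointness of $\TR(F)$ and $\TR(F')$ together with the almost-sure reachability hypothesis is precisely the intended reasoning, and your remark that the representation is almost-sure equivalence rather than literal set equality is the right caveat.
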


\noindent
By the two lemmas above and the result shown in \cite[Theorems 4.9, 4.14]{UW11}, the following theorem holds. 

\begin{theorem}\label{GTNL}
For pure finite-state, finite-state, pure and unrestricted strategies, 
TR-NCRSP$_{>}$ is undecidable. 
\end{theorem}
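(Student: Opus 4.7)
The plan is to chain together the two preceding lemmas with the undecidability results of Ummels and Wojtczak \cite{UW11}. The intended pipeline is: TR-CRSP (undecidable by \cite[Theorems 4.9, 4.14]{UW11}) $\;\longrightarrow\;$ TR-$\overline{\text{NCRSP}_{>}}$ via Lemma on the CRSP-to-$\overline{\text{NCRSP}_{>}}$ reduction, instantiated with $\textbf{F}=\textbf{F}'=\textbf{TR}$, $\;\longrightarrow\;$ TR-NCRSP$_{>}$ by taking the complement of the decision problem.

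First, I will invoke \cite[Theorems 4.9, 4.14]{UW11} to fix, for each of the four strategy classes $S\in\{\text{pure finite-state, finite-state, pure, unrestricted}\}$, a computable reduction from an undecidable problem (e.g., halting of a 2-counter machine) to $S$-$\textbf{TR}$-CRSP. Before applying the first preceding lemma, I need to argue that the SMGs produced by these reductions can be assumed, without loss of generality, to satisfy the hypothesis of the second preceding lemma, namely that every play almost surely reaches a terminal vertex under every strategy profile. If the UW11 constructions do not already have this shape, the standard fix is to append, at each non-terminal dead-end of the construction, a small gadget controlled by nature that transits to a fresh terminal state with some fixed positive probability at each step. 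Since the added probability is bounded away from $0$, a Borel--Cantelli style argument shows almost sure reachability of a terminal vertex, while the original payoffs and the existence/non-existence of a suitable Nash equilibrium are preserved.

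Once the SMG satisfies the almost-sure termination property, the second preceding lemma tells us that $\overline{O_0}$ is itself expressible as a TR objective on the same arena, so the hypothesis $\textbf{F}'\supseteq \textbf{F}\cup\text{co-}\textbf{F}$ of the first preceding lemma is met with $\textbf{F}=\textbf{F}'=\textbf{TR}$. Applying that lemma yields a polynomial-time reduction from $S$-TR-CRSP on $k+1$ players to $S$-TR-$\overline{\text{NCRSP}_{>}}$ on $k+2$ players, which therefore inherits undecidability. Finally, I observe that $\overline{\text{NCRSP}_{>}}$ is by definition the complement of $\text{NCRSP}_{>}$, and since the class of decidable languages is closed under complementation, undecidability of $S$-TR-$\overline{\text{NCRSP}_{>}}$ immediately yields undecidability of $S$-TR-$\text{NCRSP}_{>}$, completing the theorem.

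The main obstacle I anticipate is the first step: verifying that the UW11 reductions can indeed be massaged into SMGs where every play reaches a terminal vertex almost surely, and that the auxiliary nature gadget does not spoil the answer of the CRSP instance (in particular, does not create spurious Nash equilibria nor destroy the intended one, and does not change the system's payoff since terminal sinks lie outside the target set $F$). The remaining steps -- instantiating the lemma, and passing from $\overline{\text{NCRSP}_{>}}$ to $\text{NCRSP}_{>}$ -- are essentially bookkeeping.
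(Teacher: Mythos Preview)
Your proposal is correct and follows essentially the same route as the paper: combine the two preceding lemmas with the undecidability of TR-CRSP from \cite[Theorems~4.9, 4.14]{UW11}, then pass to the complement. The paper's proof is in fact just the one-line remark ``By the two lemmas above and the result shown in \cite[Theorems 4.9, 4.14]{UW11}''; it does not spell out the almost-sure termination check you flag, so your explicit discussion of that point (and your honesty about it being the only real obstacle) is, if anything, more careful than the paper---in practice the UW11 counter-machine reductions already produce arenas in which every play almost surely reaches a terminal vertex, so no gadget modification is needed.
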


%%%%%%%%%%%%%%%%%%%%%%%%%%%%%%%%%%%%%%%%%%%%%%%%%%%%%%%%%%%%%%%%%%%%%%%%%%%%%%%%%%%%%%%%%%%%%%%
%%%%%%%%%% t-memory-NCRSP %%%%%%%%%%%%%%%%%%%%%%%%%%%%%%%%%%%%%%%%%%%%%%%%%%%%%%%%%%%%%%%%%%%%%
%%%%%%%%%%%%%%%%%%%%%%%%%%%%%%%%%%%%%%%%%%%%%%%%%%%%%%%%%%%%%%%%%%%%%%%%%%%%%%%%%%%%%%%%%%%%%%%

\subsection{$t$-memory-NCRSP}

Though it is open whether Pure Finite-State-TR-NCRSP is decidable or not, 
as proved in Theorem \ref{GTNL}, Pure Finite-State-TR-NCRSP$_{>}$ is undecidable. 
However, if we consider applications in real life, it is reasonable to assume 
there is $t \in \Nat$ such that players remember 
at most $t$ vertices that precede the current vertex in the history, 
instead of having finite-memory. 
In this section, NCRSP with $t$-memory is considered. 
We say that a strategy $\sigma_{i}: V^{\ast}V_i\rightarrow D(V)$ is $t$-{\em memory} 
with some $t \in \Nat$
if for $x \in V^{\ast}$ and $v \in V_i$ such that $|x|>t$, we have
$\sigma_i(xv)=\sigma_i(zv)$ where $z$ is the suffix of $x$ such that $|z|=t$. 
A $t$-memory strategy can be given by a function $\sigma_{i}: V^{\le t}V_i\rightarrow D(V)$
with $V^{\le t} = \{ u\in V^{\ast} \mid |u| \le t \}$.  

\begin{theorem}
Given $t\in O(|V|)$, Pure $t$-memory-Muller-NCRSP is in EXPSPACE. 
\end{theorem}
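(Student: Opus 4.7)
The plan is to reduce Pure $t$-memory-Muller-NCRSP on the input SMG $\calG$ to Pure Positional-Muller-NCRSP on an expanded SMG $\calG'$ of singly-exponential size, and then apply the complexity bound for the positional case from \cite{KTS24}.

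First, I would build an expanded arena $\calA'$ whose vertex set is $V' = V^{\le t}\times V$, with $(x,v)$ assigned to the partition class of the owner of $v$. For $v\in V_{\Nature}$, the transition $((x,v),p,(y,w))$ exists with $p=\nextPr_v(w)$ exactly when $w\in\Next(v)$ and $y$ is the suffix of length $\min(|xv|,t)$ of $xv$; for $v\in V_i$ with $i\in\Omega$, we include $((x,v),\bot,(y,w))$ for every such $y,w$, leaving player $i$'s choice at $(x,v)$ to select the successor. The initial vertex is $(\varepsilon,v_0)$. Since $t\in O(|V|)$, we have $|V'|\le |V|^{t+2} = 2^{O(|V|\log|V|)}$, single-exponential in $|\calA|$. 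I would translate the Muller formula $\phi_i$ on $\calA$ to a Muller formula $\phi'_i$ on $\calA'$ by replacing each atom $v$ with $\bigvee_{x\in V^{\le t}}(x,v)$; this has size polynomial in $|V'|$, and projecting a play $\pi'$ of $\calA'$ down to its $V$-component gives a play $\pi$ of $\calA$ with $\Inf(\pi)$ equal to the image of $\Inf(\pi')$, so $\pi'\models\phi'_i$ iff $\pi\models\phi_i$.

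Next, I would establish a payoff-preserving bijection between pure $t$-memory strategy profiles on $\calG$ and pure positional strategy profiles on $\calG'$: to $\sigma_i: V^{\le t}V_i\to V$ assign the positional $\sigma'_i(x,v)$ whose $V$-component is $\sigma_i(xv)$, and vice versa. Under this correspondence, $\PR^{\barsigma}$ on $\calA$-plays coincides with the pushforward of $\PR^{\barsigma'}$ along the projection, so payoff profiles agree and $\ZNE(\barsigma)$ holds iff the corresponding $\ZNE(\barsigma')$ holds on $\calG'$. A crucial point is that every pure $t$-memory deviation by an environment player in $\calG$ corresponds to a pure positional deviation in $\calG'$ and vice versa, so the outer universal quantifier over deviations in the definition of NCRSP transfers cleanly.

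Finally, applying the Pure Positional-Muller-NCRSP upper bound of \cite{KTS24} to $\calG'$ yields the claimed EXPSPACE bound in $|\calG|$, since that bound is at most PSPACE in $|\calG'|$ and $|\calG'|$ is singly exponential in $|\calG|$. The main subtlety will be ensuring that the bijection is exact: a positional strategy on $\calA'$ that would formally send $(x,v)$ to a target with a memory component inconsistent with $xv$ cannot arise, because by construction $\Delta'$ only contains memory-consistent successors; consequently, each positional choice at $(x,v)$ is automatically legal and uniquely encodes a $t$-memory choice at $xv$ in $\calA$.
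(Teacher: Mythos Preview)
Your proposal is correct and follows essentially the same approach as the paper: unfold the arena so that each vertex records the last $t$ visited vertices (you use pairs $(x,v)\in V^{\le t}\times V$, the paper uses the concatenated string $xv$, which is merely a notational difference), translate the Muller formula by disjoining over histories, observe that pure $t$-memory strategies on $\calG$ correspond bijectively to positional strategies on the unfolded game $\calG'$ of singly-exponential size, and invoke the PSPACE bound for Positional-Muller-NCRSP from \cite{KTS24}. Your writeup is in fact a bit more explicit than the paper's about why the $\ZNE$ predicate and the universal quantifier over deviations transfer along the bijection.
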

\begin{proof}
Assume that we are given an SMG ${\cal{G}} = 
\langle  {\cal{A}}, {{(O_{i})}_{i\in\Omega}} \rangle$ where 
$\calA=\langle \Omega, V, (V_i)_{i\in\Omega\cup\{\Nature\}},  
\Delta, v_0\rangle $ and a threshold $\mu\in [0,1]$. 
From ${\cal{G}}$, we construct
${\cal{G}}'= \langle  {\cal{A'}}, {{(O_{i}')}_{i\in\Omega}} \rangle$ 
by unfolding ${\cal{A}}$ as follows. 
$\calA'=\langle \Omega, V',\allowbreak (V_{i}')_{i\in\Omega\cup\{\Nature\}}, \Delta', v_0\rangle $ 
where $V'=\{ u \mid u\in V^*$, $|u|\leq t+1 \}$, 
and for each $i\in\Omega\cup\{\Nature\}$, 
$V_{i}'=\{xv\mid x\in V^*$, $|x|\leq t$, $v\in V_{i} \}$.

If $\exists p \in [0,1]\cup \{ \bot \}.(v,p,w) \in \Delta$, then 
let $(xv,p,xvw) \in \Delta '$for every $x$ $(|x|<t)$ and 
let $(xv,p,x[1:]vw) \in \Delta '$for every $x$ $(|x|= t)$. 
The number of vertices of the unfolded game is $O(n^n)$ 
where $n$ is the number of vertices of the original game. 
Each vertex of the unfolded game corresponds to a memory-state of the original game 
that has the suffix of length at most $t$ of the history that precedes the current vertex. 
Let $\phi_{i}'$ be the boolean formula obtained by replacing 
each variable $v_j$ in $\phi_i$ into $x_{0}v_j\lor x_{1}v_j\lor \ldots$ 
with each history $x_0, x_1,\ldots (|x_{k}|\leq t)$ 
for each player $i$'s objective 
$\Muller(\phi_i)=\{\pi\in V^{\omega} \mid \Inf(\pi)\models\phi_i\}$. 
The description size of $\phi_{i}'$ is $O(n^n)$ where $n$ is the size of the original objective. 
Because a pure $t$-memory strategy 
in ${\calG}$ corresponds to a positional strategy in ${\calG}'$, 
Pure $t$-memory-Muller-NCRSP for $\calG$ is reduced to Positional-Muller-NCRSP 
for ${\calG}'$ with exponential grow-up. 
We then apply the algorithm given in the proof in \cite[Theorem 5]{KTS24} 
which runs in PSPACE\@.
Hence, Pure $t$-memory-Muller-NCRSP for a given constant $t\in O(|V|)$
is in EXPSPACE.
\end{proof}

\begin{corollary}
Given a constant natural number $t\in \mathbb{N}$, 
Pure $t$-memory-Muller-NCRSP is in PSPACE. 
\end{corollary}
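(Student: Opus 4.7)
The plan is to recycle the unfolding construction from the preceding theorem and simply re-examine the sizes under the assumption that $t$ is a fixed constant rather than linear in $|V|$. First, I would build exactly the same unfolded arena $\calA'$ whose vertices are histories of length at most $t+1$: the vertex set $V'$ has cardinality $\sum_{k=1}^{t+1}|V|^k = O(|V|^{t+1})$, and because $t$ is a constant, this is polynomial in the input size rather than exponential. The transition relation $\Delta'$ is constructed in the same way, and every entry of it can be produced in polynomial time.

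Next, I would translate each Muller winning condition $\phi_i$ on $V$ into the corresponding condition $\phi_i'$ on $V'$ by replacing every variable $v \in V$ with the disjunction $\bigvee_{x \in V^{\le t}} xv$ over all possible history prefixes of length at most $t$. Since $|V^{\le t}|$ is polynomial in $|V|$ when $t$ is constant, each $\phi_i'$ has polynomial size, and the whole translation runs in polynomial time.

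Then, as observed in the preceding theorem, pure $t$-memory strategies in $\calG$ correspond bijectively to positional strategies in $\calG'$, and this correspondence preserves winning of each $\phi_i$. Hence the instance $(\calG, \mu)$ of Pure $t$-memory-Muller-NCRSP reduces in polynomial time to the instance $(\calG', \mu)$ of Positional-Muller-NCRSP. Invoking the PSPACE algorithm for Positional-Muller-NCRSP from \cite[Theorem 5]{KTS24} on this polynomial-size input yields a PSPACE decision procedure overall, establishing the corollary.

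The only thing to verify carefully is that the size bookkeeping really collapses to polynomial when $t$ is constant, but this is immediate from $|V|^{t+1}$ being polynomial for fixed $t$, and likewise for the blow-up of each $\phi_i'$. There is no additional algorithmic content needed beyond the preceding theorem, so I do not expect a significant obstacle: the corollary is essentially a specialization in which the exponential blow-up of the unfolding becomes polynomial.
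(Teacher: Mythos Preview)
Your proposal is correct and matches the paper's approach exactly: the corollary is obtained immediately from the preceding theorem by observing that for constant $t$ the unfolded arena $\calA'$ and the translated formulas $\phi_i'$ have polynomial size, so the reduction to Positional-Muller-NCRSP is polynomial and the PSPACE bound from \cite[Theorem~5]{KTS24} applies directly. The paper gives no separate proof beyond stating the corollary, and your size bookkeeping is precisely the intended justification.
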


\noindent
In \cite[Theorem 7]{CFGR16}, it is shown that Pure-TR-NNCRSP is PSPACE-hard. 
By using the same method, we can show that 
QBF with $m$ variables can be reduced to Pure-$t$-memory-TR-NNCRSP where $t= 2m+1$. 
Hence, the following theorem is obtained. 

\begin{theorem}
Given $t\in O(|V|)$, Pure $t$-memory-TR-NNCRSP is PSPACE-hard. 
\end{theorem}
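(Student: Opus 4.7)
The plan is to reduce QBF, the canonical PSPACE-complete problem, to Pure $t$-memory-TR-NNCRSP by adapting the construction used in \cite[Theorem 7]{CFGR16} to establish PSPACE-hardness of Pure-TR-NNCRSP. Given a QBF instance $\Phi = Q_1 x_1 \cdots Q_m x_m.\ \psi(x_1,\ldots,x_m)$, I would construct an NSMG $\calG_\Phi$ and a threshold $\mu \in [0,1]$ such that $\Phi$ is valid iff the answer of Pure $(2m+1)$-memory-TR-NNCRSP on $(\calG_\Phi,\mu)$ is yes. The arena of $\calG_\Phi$ consists of an initial vertex $v_0$ followed by a ``variable ladder'' of vertices $v_1, v_1^{b_1}, v_2, v_2^{b_2}, \ldots, v_m, v_m^{b_m}$, where $v_i$ is owned by the system if $Q_i = \exists$ and by a dedicated environment player $P_i$ if $Q_i = \forall$, and each $v_i^{b_i}$ is a deterministic bookkeeping vertex encoding the value $b_i \in \{0,1\}$ chosen at $v_i$. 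After $v_m^{b_m}$ the play enters the $\psi$-verification gadget from \cite[Theorem 7]{CFGR16}, which routes the play to a terminal $t_{\top}$ (in the system's TR target set $F$) iff $\psi(b_1,\ldots,b_m)$ holds, and to $t_{\bot} \notin F$ otherwise. The gadget's routing only depends on the last $2m+1$ vertices of the history, which is precisely the span containing all bookkeeping vertices $v_i^{b_i}$, matching the $t$-memory available to the controlling players.

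With $\mu = 1$ and the environment players' TR objectives taken from the CFGR16 construction, each 0-fixed Nash equilibrium would correspond to a unique universal assignment $(b_i)_{Q_i = \forall}$, and the system's payoff in that equilibrium is $1$ iff the system's existential choices -- prescribed by $\sigma_0$ as a function of the universal prefix -- make $\psi$ true. Hence the NCRSP instance has answer yes iff $\Phi$ is valid. Since $|V| = \Theta(m)$, we have $t = 2m+1 \in O(|V|)$ as required, and the reduction is polynomial.

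The key verification step is that restricting strategies to pure $t$-memory does not change the set of 0-fixed Nash equilibria compared to the unrestricted pure-strategy setting of \cite[Theorem 7]{CFGR16}. This holds because every play is trapped in a terminal self-loop after at most $2m+2$ steps, so on the finite reachable prefixes every pure strategy -- and every potential profitable deviation used to witness a violation of the 0-fixed NE property -- is already determined by its values on histories of length at most $2m+1$. The main obstacle is carefully matching the CFGR16 verification gadget to the $t$-memory window so that its routing can be expressed as a function of the last $2m+1$ history vertices; making each $v_i^{b_i}$ deterministic with a single outgoing edge to $v_{i+1}$ ensures that bookkeeping introduces no spurious strategic choices and that the assignment $(b_1,\ldots,b_m)$ is unambiguously recoverable from the history, so the argument of \cite[Theorem 7]{CFGR16} lifts verbatim and yields PSPACE-hardness.
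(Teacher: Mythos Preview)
Your proposal is correct and follows essentially the same approach as the paper: both reduce QBF with $m$ variables to Pure $t$-memory-TR-NNCRSP by invoking the construction of \cite[Theorem~7]{CFGR16} and taking $t=2m+1\in O(|V|)$. The paper's own argument is only a one-sentence sketch to this effect, so your elaboration on why $t$-memory suffices (plays reach a terminal within $O(m)$ steps, hence every relevant history fits in the $t$-window and the CFGR16 equilibrium analysis carries over verbatim) is entirely in line with, and more detailed than, what the paper provides.
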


\section{Conclusion}
Computational complexity of NCRSP 
on stochastic multi-player non-zero-sum games was analyzed in this paper. 
In Sect. 3, Stationary-NCRSP and Stationary-Positional-NCRSP were shown 
to be solvable in 3-EXPTIME and NEXPSPACE, respectively. 
In Sect. 4, TR-NCRSP$_{>}$ was shown to be undecidable
for pure finite-state, finite-state, pure and unrestricted strategies
while Pure $t$-memory-Muller-NCRSP was shown to be EXPSPACE solvable. 
It is open whether TR-NCRSP for the above four strategy classes are decidable. 

In this paper, we make some assumptions on SMG. 
A game is assumed to be turn-based 
and the rationality of the environment players can be represented by Nash equilibria. 
A game is called concurrent if 
each vertex of the game arena can be controlled by multiple players. 
Examples of refinement of Nash equilibria 
include subgame perfect equilibria, Doomsday equilibria and robust equilibria. 
Extending this research by considering concurrent games and refined equilibrium concepts 
is left as future work. 
\medskip\par\noindent
{\bf Acknowledgement} The authors would like to thank Dr. Mizuhito Ogawa
for informing them of cylindrical algebraic decomposition and Proposition \ref{EAFL}.

%\bibliographystyle{ieicetr}% bib style
%\bibliography{}% your bib database

%\profile{}{}
%\profile*{}{}% without picture of author's face

\end{document}